\def\d{{\mathrm d}}
\def\ZZ{{\mathbb Z}}
\newcommand{\beq}{\begin{equation}}
\newcommand{\norm}[1]{\lVert #1 \rVert}
\newcommand{\eeq}{\end{equation}}
\newcommand{\beqn}{\begin{eqnarray}}
\newcommand{\eeqn}{\end{eqnarray}}
\newcommand{\bal}{\begin{align}}
\newcommand{\eal}{\end{align}}
\newcommand{\scalar}[2]{\langle{#1} \mspace{2mu}, {#2}\rangle}
\def\CC{{\mathbb C}}
\def\({\left(}
\def\){\right)}
\def\L{\varLambda}
\def\l{\lambda}
\def\EE{{\mathbbmss E\,}}
\def\11{{\mathbbmss 1}}
\def\PP{{\mathbbmss P}}
\def\i{{\mathrm i}}
\def\e{{\mathrm e}}
\def\tr{{\mathrm{tr}\;}}
\def\eps{{\varepsilon}}
\newtheorem{thm}{Theorem}
\newtheorem*{lem*}{Lemma}
\begin{document}
\title{Anderson Localization Triggered by Spin Disorder---with an Application to $\mathrm{Eu}_x\mathrm{Ca}_{1-x}\mathrm{B}_6$}
\author[1]{Daniel Egli\thanks{danegli@phys.ethz.ch}}
\author[1]{J\"urg Fr\"ohlich\thanks{juerg@phys.ethz.ch}}
\author[3]{Hans-Rudolf Ott\thanks{ott@phys.ethz.ch}}
\affil[1]{Institute for Theoretical Physics \\ ETH Z\"urich \\ Wolfgang-Pauli-Strasse 27 \\ CH-8093 Z\"urich}
\affil[2]{Laboratory for Solid State Physics\\ ETH Z\"urich\\ Schafmattstrasse\\ CH-8093 Z\"urich}
\maketitle

\abstract{The phenomenon of Anderson localization is studied for a class of one-particle Schr\"odinger operators with random Zeeman interactions. These operators arise as follows: Static spins are placed randomly on the sites of a simple cubic lattice according to a site percolation process with density $x$ and coupled to one another ferromagnetically. Scattering of an electron in a conduction band at these spins is described by a random Zeeman interaction term that originates from indirect exchange. It is shown rigorously that, for positive values of $x$ below the percolation threshold, the spectrum of the one-electron Schr\"odinger operator near the band edges is dense pure-point, and the corresponding eigenfunctions are exponentially localized.

Localization near the band edges persists in a weak external magnetic field, $H$, but disappears gradually, as $H$ is increased. Our results lead us to predict the phenomenon of colossal (negative) magnetoresistance and the existence of a Mott transition, as $H$ and/or $x$ are increased.

Our analysis is motivated directly by experimental results concerning the magnetic alloy $\mathrm{Eu}_x\mathrm{Ca}_{1-x}\mathrm{B}_6$.}

\section{Introduction}
Theoretical understanding of the effects of disorder on electron transport in metals and semiconductors started with Anderson's famous 1958 paper \cite{anderson58}. In this work, Anderson argued that in the presence of strong disorder caused by impurities and/or defects and neglecting electron-electron interactions, electrons populating a weakly filled conduction band of a metal get trapped in exponentially sharply localized one-particle orbitals. This is the phenomenon known as ``Anderson localization''. A consequence of localization is that the conductivity of such a material very nearly vanishes at low temperatures. If disorder is described by an on-site random potential with bounded distribution and short-range correlations in a one-electron tight-binding Hamiltonian then Anderson's arguments can be made precise, mathematically, for one-dimensional systems with arbitrarily weak disorder \cite{Goldsheid77,Kunz80}, and for higher-dimensional systems, provided that disorder is strong enough or the energy of the one-electron orbital lies in the band tails \cite{froehlich83,froehlich85}. It is generally expected---but not rigorously proven---that, in two-dimensional systems of this kind, all states are localized, no matter how weak the disorder is. In contrast, in three or more dimensions, localized states with energies in the band tails are expected to coexist with extended states (generalized eigenstates of the model Hamiltonian) corresponding to energies in the continuous spectrum near the center of the band, provided the disorder is sufficiently weak. It is expected that wave packets made from superpositions of such extended states exhibit diffusive propagation corresponding to a non-zero conductivity \cite{wegner79,abrahams79}. One is led to predict that, at very low temperatures, a three-dimensional disordered semiconductor exhibits a transition from an insulating state (all electrons in the conduction band occupy localized states) to a conducting state (some fraction of the electrons populate extended states), as the density of electrons in the conduction band is enhanced or the strength of disorder is lowered. This transition from an insulator to a metal is called a ``Mott transition''.

The purpose of this paper is to describe and analyze a tight-binding model for a Mott transition where the disorder is caused by indirect exchange interactions between the electrons in a conduction band and a dilute array of localized atoms with non-vanishing spin. This model has been introduced in order to interpret theoretically electronic properties of Europium-based hexaborides ($\mathrm{Eu}_x\mathrm{Ca}_{1-x}\mathrm{B}_6$), \cite{beeli04}. A Mott transition is found experimentally as the concentration, $x$, of the magnetic Europium atom is varied. This feature is captured correctly in our model. Furthermore, the model also suggests an explanation of the phenomenon of ``colossal negative magneto-resistance'' encountered in experimental explorations of $\mathrm{Eu}_x\mathrm{Ca}_{1-x}\mathrm{B}_6$; see \cite{beeli04}.

Fairly simple arguments presented below lead us to introduce a model given in terms of a one-electron tight-binding Hamiltonian with a \textbf{random Zeeman interaction term} acting on electron spin. This term describes indirect exchange interactions between an electron in the conduction band and electrons in the half-filled $4f$ shell of a Eu-atom located nearby. It takes the form of a ferromagnetic coupling of the spin of the electron in the conduction band to the static total spin of electrons in the $4f$ shell of a Eu-atom. Because the latter is quite large, $S=7/2$, it can be described, in good approximation, by a classical unit vector, $\vec{m}$ \cite{lieb73}. However, if a unit cell of the simple cubic lattice of a $\mathrm{Eu}_x\mathrm{Ca}_{1-x}\mathrm{B}_6$ alloy contains a Ca-atom then $\vec{m}=0$, because a Ca-atom has spin 0. At low temperatures, the direction of $\vec{m}$ is approximately constant throughout a connected Eu-cluster, because of indirect ferromagnetic exchange interactions between the spins of different Eu-atoms in the cluster. However, the direction of $\vec{m}$ varies randomly from one Eu-cluster to another, as long as the external magnetic field vanishes. Thus, an electron in the conduction band of a $\mathrm{Eu}_x\mathrm{Ca}_{1-x}\mathrm{B}_6$ alloy in zero magnetic field propagates in a disordered quasi-static background of essentially \textbf{classical spins} located in those unit cells that contain a Eu-atom. These spins are ferromagnetically coupled to the spin operator of the electron.

One of our main contentions in this paper is that, as long as there is no ferromagnetic long-range order (unit cells containing a Eu-atom do not percolate), but the concentration of Eu-atoms is not too small, in zero magnetic field, this type of magnetic disorder causes \textbf{Anderson localization} in the tails of the conduction band.

If the concentration, $x$, of Eu-atoms is brought above the percolation threshold then there is an infinite connected cluster of positive density of unit cells containing a Eu-atom, and the alloy is observed to order ferromagnetically at low enough temperatures \cite{wigger05}. Most Eu-spins are then aligned in a fixed direction. The same happens if a sufficiently strong external magnetic field is applied. Finally, if $x$ is very small most unit cells exhibit a vanishing spin, that is, the vector $\vec{m}$ vanishes in most unit cells. In all these three situations, the disorder felt by electrons in the conduction band is weak, so that the localization threshold (or ``mobility edge'') moves towards the band edges. We thus expect to observe a delocalization-- or Mott transition to a conducting state, as $x$ increases across $x_{\rm c}$, or if the external magnetic field is increased.

It is not understood, at present, how to \textbf{prove} the existence of such a transition and analyze its characteristics, although, heuristically, it is fairly well understood. But the existence of Anderson localization in the band tails, for $x$ below the percolation threshold, but non-zero, and for a sufficiently weak external magnetic field, can be proven rigorously. A sketch of our proof is the main message of our paper; technical details can be found in \cite{doktorarbeit}. We have profited from previous results in \cite{bourgain09,spencer93}.

While the main mathematical results presented in this paper may not be particulary suprising, they concern examples of Anderson localization that have \textbf{not} previously been studied mathematically.

Our discussion is summarized Figures 1 and 2.

\begin{figure}[h]\center
\psfrag{metallic}{metallic behavior}
\psfrag{insulator}{insulator}
\psfrag{1}{$1$}
\psfrag{0}{$0$}
\psfrag{H}{$H$}
\psfrag{x}{$x$}
\psfrag{xc}{$x_{\rm c}$}
\includegraphics[scale=0.5]{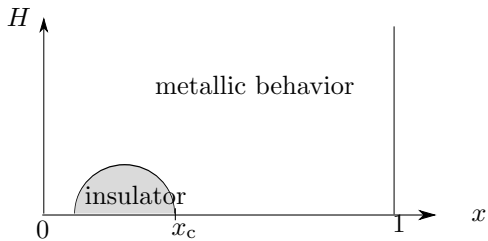}
\caption{$x$ denotes the concentration of Eu-atoms, $H$ the value of a homogeneous external magnetic field. The concentration of conduction electrons is assumed to be approximately constant. The shaded area corresponds to an insulating state; a Mott transition to a semi-metal is expected to be observed at its boundary. Rigorous results are known for a subset of the parameter values inside the shaded area.}\label{figure1}
\end{figure}

\begin{figure}[h]\center
\psfrag{H}{$H$}
\psfrag{sigma}{$\sigma$}
\psfrag{0}{$0$}
\psfrag{1}{$1$}
\psfrag{x}{$x$}
\includegraphics[scale=0.5]{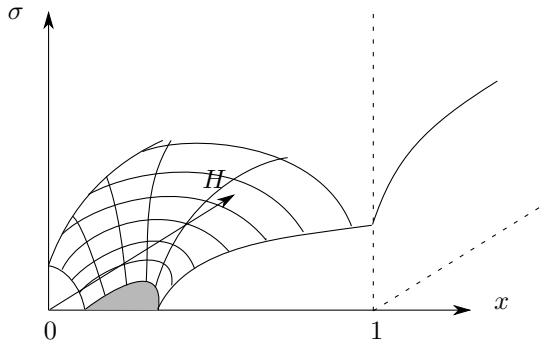}
\caption{$\sigma$ denotes the conductivity of the alloy. The figure provides a \textbf{qualitative} plot of $\sigma$.}\label{figcond}
\end{figure}

Our paper is organized as follows. In Section 2, we briefly recapitulate some experimental findings on $\mathrm{Eu}_x\mathrm{Ca}_{1-x}\mathrm{B}_6$ and sketch physical mechanisms that may be at the root of the observed phenomena. They lead us to propose, in Section 3, some idealized models that are expected to yield an adequate theoretical interpretation of the observed phenomena. In Section 4, we formulate our main mathematical results on our models and sketch how they may be used to interpret various features found in the experiments. In Section 5, brief outlines of proofs of our main results are presented; (for details, the reader is referred to \cite{doktorarbeit}).

\section{Summary of experimental results concerning $\mathrm{Eu}_x\mathrm{Ca}_{1-x}\mathrm{B}_6$, and physical mechanisms}
We begin by recalling some essential properties of $\mathrm{Eu}\mathrm{B}_6$. This binary compound crystallizes in a simple cubic lattice. At the center of each unit cell of the crystal there is a divalent Eu atom, at every corner of a unit cell there is an octahedron of B-ions; see Figure 3 below.

\begin{figure}[h]\center
\psfrag{B}{B}
\psfrag{Eu}{Eu}
\includegraphics[scale=0.5]{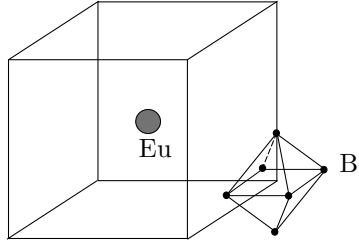}
\caption{Schematic unit cell of $\mathrm{Eu}\mathrm{B}_6$; each cube corner is the centre of a Boron octahedron}\label{}
\end{figure}
The $4f$-shell of a Eu-atom is half filled, which, according to Hund's rule, implies that the total spin is $s=7/2$. Electron transport is dominated by defect-state conduction, with a low concentration, $n_{c}$, of around $10^{-3}$ charge carriers per unit cell \cite{wigger04}. At low temperatures, $\mathrm{Eu}\mathrm{B}_6$ orders ferromagnetically at a Curie temperature $T_{C}\simeq$ 12 K, accompanied by a significant reduction of the resistivity, $\rho$, in the ordered phase \cite{fisk79}. The isostructural compound $\mathrm{Ca}\mathrm{B}_6$ is obtained by replacing Eu by isoelectronic but non-magnetic Ca, which leads to a further reduction of $n_{\rm c}$ by an order of magnitude \cite{beeli04}. In the series $\mathrm{Eu}_x\mathrm{Ca}_{1-x}\mathrm{B}_6$, $T_{\rm C}$ decreases monotonically with decreasing $x$, down to $x\simeq 0.3$. At lower values of $x$, no onset of long-range magnetic order is observed. Instead spin-glass type features dominate the magnetic response at low temperatures \cite{wigger05}.  For the simple cubic Eu sublattice, $x_{\rm c} = 0.31$ is the site percolation limit \cite{wigger05}. In the concentration range $0.2<x<0.3$, significant localization and colossal magnetoresistance effects, such as shown in Figure 4, have been observed. For $x = 0.27$, the enhancement of the low-temperature resistivity by six orders of magnitude below 10 K may be quenched by rather moderate magnetic fields of the order of 1 T.
\begin{figure}[h]\center
\psfrag{B}{B}
\psfrag{Eu}{Eu}
\includegraphics[scale=0.4]{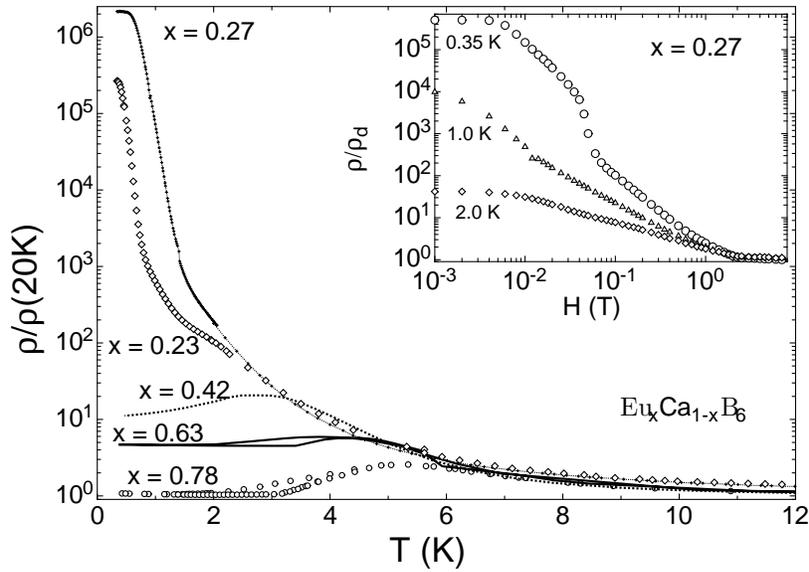}
\caption{Main panel: $\rho(T)$ of $\mathrm{Eu}_x\mathrm{Ca}_{1-x}\mathrm{B}_6$ for various values of $x$. The thin solid line for $x=0.27$ is to guide the eye. Inset: Magnetoresistance of $\mathrm{Eu}_{0.27}\mathrm{Ca}_{0.73}\mathrm{B}_6$ at low temperatures.}\label{}
\end{figure}
Detailed investigations using selected-area electron diffraction patterns and high-resolution transmission electron-microscopy (HRTEM) have shown that also for large concentrations of Ca for Eu, the structural quality, that is, the perfect atomic arrangement in a simple cubic lattice is preserved and the disorder is simply in the spins on the sites of the Eu clusters. Energy-filtered TEM reveals a phase separation into microscopically small Ca- and Eu-rich regions, respectively. This implies that the material is magnetically and electronically inhomogeneous \cite{beeli04}.

Next, we sketch some ideas on a possible mechanism that may explain the long-range ferromagnetic order observed in $\mathrm{Eu}\mathrm{B}_6$ at temperatures below $T_{\rm C}$. See also \cite{pereira04} for a similar discussion. The large size of the unit cells of $\mathrm{Eu}\mathrm{B}_6$ (as compared to the size of a Eu atom) and numerical simulations \cite{kunes04} suggest that ferromagnetic order is established through \textbf{indirect exchange} mediated by electrons in a somewhat less than half-filled valence band, with strong on-site Coulomb repulsion preventing double occupancy; see Figure 5. For a non-vanishing density of holes in the valence band \cite{wigger04}, the spins of the electrons in the valence band are expected to order ferromagnetically at very low temperatures. For the groundstate, this is a prediction of the Thouless-Nagaoka theorem \cite{thouless65,nagaoka66,aizenman90}; (see also \cite{ueltschi05} for an analysis of ferromagnetism in the Hubbard model). Because of overlap of the orbitals of electrons in the valence band with those in the $4f$-shells of Eu-atoms, the spin of a valence electron in a unit cell has a tendency of being ``anti-parallel'' to the total spin of the Eu-atom in the same unit cell, provided the temperature is low. Appealing to Hund's rule, this is seen to be a consequence of Pauli's exclusion principle and of the half-filling of the $4f$-shell. Hopping processes of valence electrons into either an empty orbital of the $4f$-shell of a Eu-atom or to an empty orbital of the valence band thus give rise to ferromagnetic order among the spins of the Eu-atoms and those of the valence electrons, the latter being ``anti-parallel'' to the spins of the Eu-atoms.
\begin{figure}[h]\center 
\psfrag{C}{conduction band}
\psfrag{V}{valence band}
\psfrag{LM}{localized Eu-spins}
\includegraphics[scale=0.5]{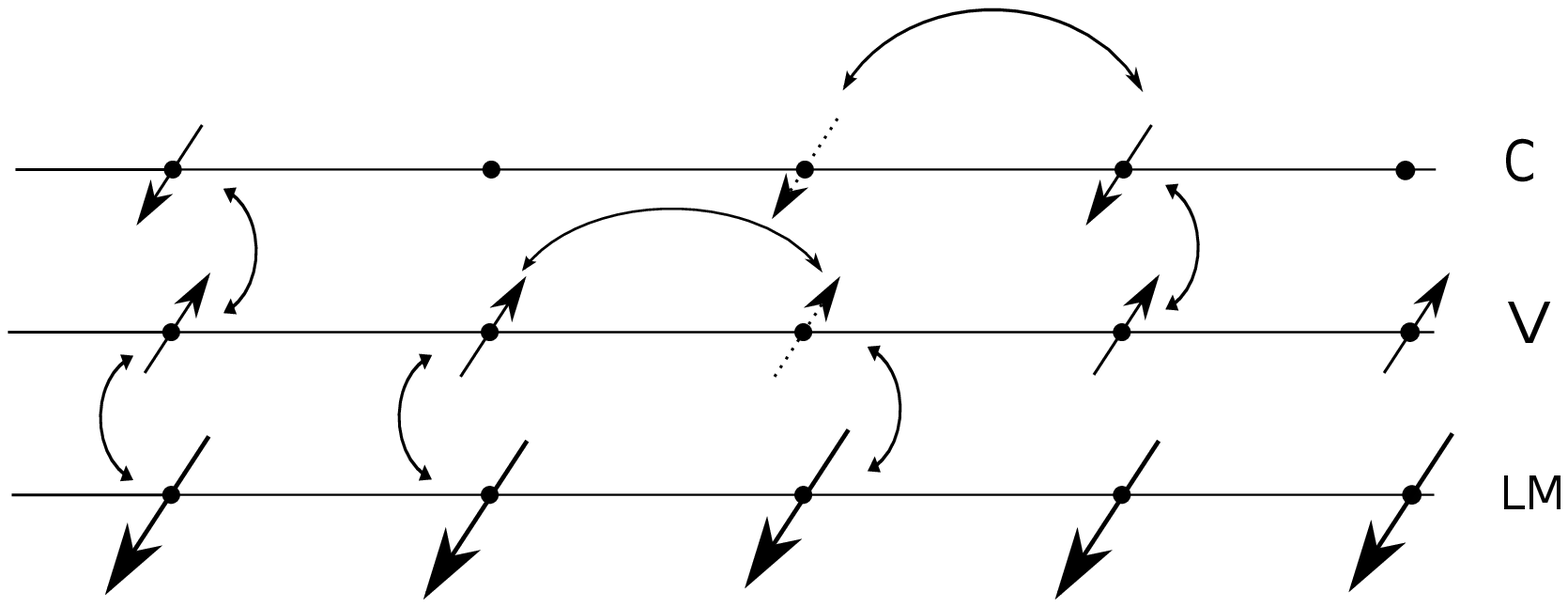}
\caption{Spin ordering in $\mathrm{Eu}\mathrm{B}_6$.}\label{}
\end{figure}
Because the orbitals of conduction electrons overlap with those of valence electrons, there are exchange interactions between conduction-- and valence electrons that, because of the Pauli principle, favor anti-ferromagnetic order between conduction-- and valence electrons. Thus, the spins of conduction electrons have a tendency of being aligned with the spins of the Eu-atoms. We will describe this tendency by a Heisenberg term that couples the spin of a conduction electron in a unit cell ferromagnetically to the spin of the Eu-atom in the same unit cell. Since the spin of the Eu-atom is rather large ($S=7/2$), we propose to describe it as a static classical spin, $\vec{m}$. It would be of considerable interest to improve the theoretical understanding of ferromagnetism in a one-band Hubbard model coupled to a lattice of large localized spins.

Our main concern, in this paper, is to provide some qualitative understanding of electronic properties of $\mathrm{Eu}_x\mathrm{Ca}_{1-x}\mathrm{B}_6$. Experimentally, a Mott transition from a metallic state to an insulator is observed, as the concentration, $x$, of Eu is lowered. Furthermore, for $x\lesssim 0.3$, when a homogeneous external magnetic field is turned on, ``colossal negative magnetoresistance'' is observed, \cite{beeli04}. These are the phenomena we wish to focus on in this paper.

In our somewhat idealized theoretical description of $\mathrm{Eu}_x\mathrm{Ca}_{1-x}\mathrm{B}_6$, we place Eu-- and Ca-atoms at the centers of the unit cells of the simple cubic lattice $\ZZ^3$ according to a site percolation process, with probability $x$ to find a Eu-atom at a given site. The mechanism for ferromagnetic order through indirect exchange described above suggests that, within connected clusters of unit cells filled with Eu-atoms, the spins of the Eu-atoms are ferromagnetically ordered. Since different Eu-clusters are separated by regions filled with non-magnetic Ca-atoms, one expects that the directions in which the spins of Eu-atoms are aligned vary randomly from one Eu-cluster to the next, as long as the external magnetic field vanishes (or is very small). If there is no infinite cluster of Eu-atoms this introduces disorder, and, because the conduction electrons are scattered at the spins of the Eu-atoms, it enhances a tendency towards Anderson localization of conduction electrons.

The threshold for the emergence of an infinite connected cluster in a site percolation process on $\ZZ^3$ is $x_{\rm c}\simeq 0.31$. For $x$ above $x_{\rm c}$, one expects that there exists an infinite connected cluster of Eu-atoms. At low temperature, the spins of the Eu-atoms in the infinite cluster are all aligned, so that spin-disorder is weak. But if $x$ is small there is an infinite cluster of non-magnetic Ca-atoms, while Eu-clusters are tiny, on average, and sparse. Hence, spin-disorder is again weak. However, for $x$ in some range below $x_{\rm c}$, and in zero external magnetic field, there is considerable disorder in the way spins in different Eu-clusters are aligned. This enhances scattering of conduction electrons at different Eu-clusters, and one expects that the mobility edge, $E_*$, separating low-lying \textbf{localized} orbitals from \textbf{extended states} near the center of the conduction band is shifted away from the band edge towards the center of the band. If the Fermi energy in the conduction band is approximately constant as $x$ varies one is led to predict that Mott transitions may be observed at some $x^*\gtrsim x_{\rm c}$ and some $x_*\ll x_{\rm c}$; see Figure 5.
\begin{figure}[h]\center
\psfrag{E}{$E$}
\psfrag{Es}{$E_*(x)$}
\psfrag{EF}{$E_{\rm F}$}
\psfrag{0}{$0$}
\psfrag{1}{$1$}
\psfrag{x}{$x$}
\psfrag{xu}{$x_*$}
\psfrag{xo}{$x^*$}
\psfrag{xc}{$x_{\rm c}$}
\includegraphics[scale=0.5]{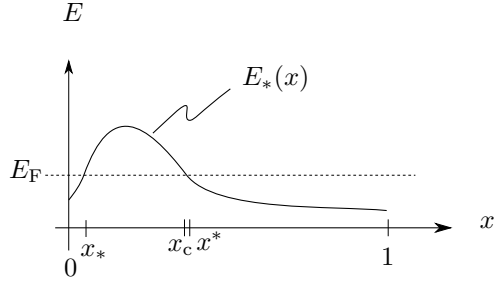}
\caption{Mobility edge, $E_*(x)$, as a function of $x$.}\label{}
\end{figure}

\section{Idealized model for Mott transitions driven by spin disorder}
In this section, we propose a model expected to exhibit some of the phenomena described in the last section, namely the Mott transition and the colossal magnetoresistance observed in $\mathrm{Eu}_x\mathrm{Ca}_{1-x}\mathrm{B}_6$ alloys. Our model is idealized to an extent that some of its properties, in particular Anderson localization, can be established rigorously. 

Because, experimentally, the conduction band of $\mathrm{Eu}_x\mathrm{Ca}_{1-x}\mathrm{B}_6$ is only weakly populated, $n_{\rm c}\lesssim O(10^{-3})$, we neglect interactions among conduction electrons and describe the propagation of a conduction electron with the help of a one-particle model. It is convenient to make use of a tight-binding approximation. The Hilbert space of pure state vectors of a conduction electron is then given by
\begin{align}\label{eq:fr1}
  \mathcal{H}=l^2(\ZZ^3)\otimes\CC^2\,.
\end{align}
Although valence electrons mediate an indirect exchange interaction between conduction electrons and the electrons in the half-filled $4f$ shells of Eu atoms, they do not appear explicitly in our model. Instead, the interactions of conduction electrons with the local Eu spins are described by a Heisenberg term coupling the spin of a conduction electron to the spin of a Eu atom localized in the same unit cell. Since the latter is quite large (s=7/2), we describe it by a \textbf{classical unit vector}, $\vec{m}$. The Heisenberg term then takes the form of a Zeeman term, $-J\vec{m}\cdot\vec{\sigma}$, where $\vec{\sigma}$ is the vector of Pauli matrices associated with a conduction electron, and $J>0$ is a constant. If a unit cell $j\in\ZZ^3$ is filled with a Eu atom then $|\vec{m}_j|=1$; if it is filled with a Ca-atom then $|\vec{m}_j|=0$. Eu-- and Ca-atoms are assumed to be distributed over the unit cells of $\ZZ^3$ by a \textbf{site percolation process}, with probability $x$ to place a Eu atom at any given site. The configuration, $\{\vec{m}_j\}_{j\in\ZZ^3}$, of classical spins is treated as quenched (in particular time-independent). Because of the observed tendency of Eu-spins in a connected Eu-cluster to order ferromagnetically, the distribution of the configurations $\{\vec{m}_j\}_{j\in\ZZ^3}$ of Eu-spins in every connected Eu-cluster, $\mathcal{C}$, is chosen to be given by a Gibbs measure
\begin{align}\label{eq:fr2}
\d\PP_{\mathcal{C}}(\vec{m}):=Z_{\mathcal{C}}^{-1}\exp\{\kappa\sum_{\substack{i,j\in\mathcal{C}\\|i-j|=1}}\vec{m}_i\cdot\vec{m_j}+\beta H\sum_jm_j^{(3)}\}\prod_{j\in\mathcal{C}}\delta(|\vec{m}_j|^2-1)\d^3m_j\,.
\end{align}
Here, $\kappa=\kappa(T)$ is a temperature-dependent, positive constant ($\kappa(T)$ is decreasing with $T$), $\beta$ is proportional to the inverse temperature, $H$ is the strength of a uniform external magnetic field in the $z$-direction, $m_j^{(3)}$is the $z$-component of $\vec{m}_j$, and $Z_{\mathcal{C}}$ (the cluster partition function) is chosen such that $\d\PP_{\mathcal{C}}$ is a probability measure.

The distribution of the Eu-clusters, $\mathcal{C}$, is given by an independent site percolation process with density $x$.

One might envisage to combine the distribution of the Eu-clusters $\mathcal{C}$ and of the configurations $\{\vec{m}_j\}_{j\in\ZZ^3}$ (with $\vec{m}_j=0$ if $j$ is occupied by a Ca atom) into a single probability distribution that would then describe a tendency towards Eu-Ca phase segregation. To simplify matters, we will not consider this possibility in the present paper.

The one-particle tight-binding Hamiltonian is chosen to be given by
\begin{align}\label{eq:fr3}
  h(\omega):=T+w_j(\omega)-J\vec{m}_j(\omega)\cdot\vec{\sigma}\,,
\end{align}
where $T\overset{e.g.}{=}-\Delta$ is a short-range hopping term ($\Delta$ is the discrete Laplacian), $\omega$ denotes the randomness of the interaction terms, $\{\vec{m}_j(\omega)\}$ is distributed according to (\ref{eq:fr2}), and $w(\omega)$ is a Bernoulli random potential with distribution
\begin{align}\label{eq:fr4}
w_j(\omega)=\begin{cases}w&\textnormal{if $\vec{m}_j\neq 0$ (that is, $j$ occupied by Eu)}\\-w&\textnormal{if $\vec{m}_j=0$ (that is, $j$ occupied by Ca)}\end{cases}\,.
\end{align}
The potential $w$ is incorporated in (\ref{eq:fr3}) because the potential energy of a conduction electron at a site $j$ may depend on whether $j$ is occupied by a Eu atom or a Ca atom.

The quantity of main interest to us, in this paper, is the electrical conductivity
\begin{align}\label{eq:fr5}
\sigma=\frac{e^2}{h}D\,,
\end{align}
where $D$ is the diffusion constant of conduction electrons. At temperature $T=0$ and for a given Fermi energy $E_{\rm F}$, $D$ is given by
\begin{align}\label{eq:fr6}
D=\int_{-\infty}^{E_{\rm F}}\d E\rho(E)D(E)\,,
\end{align}
where $\rho(E)$ is the density of states. In linear response theory, $D(E)$ is given by the Kubo formula
\begin{align}\label{kubo}
 \rho(E)D(E)&=\lim_{\eps\to 0}\frac{2\eps^2}{3\pi}\sum_{j\in\ZZ^3} |j|^2\, \mathbbmss{E} |\langle 0|(h(\omega)-E-\i\eps)^{-1}|j\rangle|^2\,,
\end{align}
where $\EE$ denotes an expectation with respect to the distributions given in (\ref{eq:fr2}) and (\ref{eq:fr4}), with Eu-clusters constructed according to an independent site percolation process of density $x$.

In order to keep our exposition simple, we consider some limiting regimes of the model introduced in (\ref{eq:fr2})--(\ref{eq:fr4}).
\\

(A) $\kappa\to 0$, $H$ small.

This regime is appropriate to describe elec\-tronic pro\-perties of\\ $\mathrm{Eu}_x\mathrm{Ca}_{1-x}\mathrm{B}_6$ in the absence of magnetic order (e.g.\, for $x$ very small, or well above the Curie temperature of the magnetic transition). Mathematically, this is the easiest regime. Using methods developed in \cite{bourgain09}, it is not very difficult to establish Anderson localization, for a fixed value of $x>0$, provided the energy lies sufficiently close to the band edges (depending on $x$ and the value of the constant $J$ in (\ref{eq:fr3})). Coexistence of localized states corresponding to energies in the band tails and extended states corresponding to energies near the center of the conduction band is expected for $J$ small enough. However, the nature of the spectrum of the random Schr\"odinger Hamiltonian $h(\omega)$ defined in (\ref{eq:fr3}) near the center of the energy band is very poorly understood, at present.
\\

(B) $\kappa\to\infty$, $H$ small.

In this regime, the spins of Eu atoms in every connected Eu-cluster are completely aligned, but their direction can vary arbitrarily from one such cluster to another one. Anderson localization can be proven for energies sufficiently close to the band edges, for sufficiently small values of the Eu concentration, $x$, so that, in particular, Eu does not percolate. For $x$ sufficiently close to 1 an infinite, ferromagnetically ordered Eu-cluster of density fairly close to $x$ exists, and we expect to find two mobility edges close to the band edges. (A mobility edge separates energies corresponding to localized states from energies corresponding to extended states.) Mathematically, the existence of mobility edges remains, however, an open issue.
\\

(C) $H\to\infty$.

In this limit, all the spins $\vec{m}_j$ are aligned in the positive $z$-direction. The conduction band then splits into two independent subbands for electrons with spin in the negative $z$-direction and those with spin in the positive $z$-direction, respectively. Within each subband, the Hamiltonian $h(\omega)$ is then equivalent to a ``Bernoulli Hamiltonian''
\begin{align}\label{eq:fr8}
h(\omega)=T+v(\omega)\,,
\end{align}
where
\begin{align}
v_j(\omega)\equiv v_j^{\pm}(\omega):=\begin{cases}w_j\pm J\,,& \vec{m}_j\neq 0\\-w_j\,,&\vec{m}_j=0\end{cases}\,.
\end{align}
Adapting methods developed in \cite{spencer93}, we show that, at all energies in small intervals attached to the band edges, except possibly in subsets of those intervals of very small Lebesgue measure, the quantity $\rho(E)D(E)$ introduced in (\ref{kubo}) vanishes, as long as $x\neq 0,1$. However, it is not known whether corresponding eigenstates are exponentially localized. It should be pointed out that the localization effect in this regime seems to be very weak, as can be seen in the inset of Figure 4.\\

Assuming that mobility edges, $E_*$, exist, separating energies $E$ with \\$\rho(E)D(E)=0$ from energies $E'$ closer to the center of the band, where $\rho(E')D(E')>0$, we expect (on the basis of our mathematical analysis of regimes (A) and (B)) that, as a function of $H$, $E_*=E_*(H)$ moves ever closer to a band edge, as $H$ increases (that is, as magnetic disorder decreases). Thus, for $0<x\lesssim 0.3$ and for a small, but positive density of conduction electrons, it can be expected that, at zero temperature, our model describes a Mott transition from an insulating state at small values of the magnetic field $H$ to a conducting state at large values of $H$. If correct this conjecture would explain the colossal (negative) magnetoresistance observed at $x=0.27$ and very low temperatures, recall Figure 4.

In the remaining sections of this paper, we switch gears, from physical reasoning to mathematics. We state some mathematical results proven for the model introduced in (\ref{eq:fr2})--(\ref{eq:fr4}), in regimes (A),(B), and (C), and we present outlines of proofs.

\section{Main mathematical results}
For the model of the electronic structure of $\mathrm{Eu}_x\mathrm{Ca}_{1-x}\mathrm{B}_6$ introduced in Section 3, we are able to prove the following main results.\\

In regime (A):

\begin{thm}The spectrum of $h(\omega)$ close to the band edges is almost surely pure point, and the eigenfunctions decay exponentially fast. A fortiori,  $\rho(E)D(E)$ vanishes for $E$ in the pure point spectrum.
\end{thm}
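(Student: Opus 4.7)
The plan is a bootstrap multiscale analysis (MSA) in the Bourgain-Kenig style \cite{bourgain09}, adapted to the specific disorder of (\ref{eq:fr2})--(\ref{eq:fr4}). In regime (A), sending $\kappa\to 0$ with $H$ small reduces the cluster Gibbs measure to independent uniform distributions on $S^2$ at Eu-occupied sites, while the site-occupation pattern and the associated $\pm w$ potential are drawn from Bernoulli site percolation of density $x\in(0,1)$. The disorder in $h(\omega)$ is therefore hybrid: a continuous component (the spin directions $\vec{m}_j(\omega)$) layered on top of a singular Bernoulli component (the occupation variable and the sign of $w_j(\omega)$). The latter is the source of all the difficulty, since no ordinary Wegner estimate is available.

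The scheme has four steps. \emph{Step 1 (band edges).} The process $(\vec{m}_j,w_j)_{j\in\ZZ^3}$ is ergodic under lattice translations, so $\sigma(h(\omega))$ is a deterministic set $\Sigma\subset\RR$; its extremal points $E_\pm$ are located by variational arguments against a spin-aligned, Eu-filled reference configuration in a large box. \emph{Step 2 (initial scale estimate).} Show that, for some $\eta>0$ and arbitrary $p>0$,
\beq
  \PP\{\sigma(h_L(\omega))\cap[E_-,E_-+\eta]\neq\emptyset\}\le L^{-p}
\eeq
for $L$ large, where $h_L(\omega)$ is the Dirichlet restriction of $h(\omega)$ to a box of side $L$. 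Eigenvalues of $h_L(\omega)$ near the bottom of $\Sigma$ require rare configurations of large, nearly-aligned Eu-clusters (whose existence is strongly suppressed below the percolation threshold) together with favorable $\pm w$ values; the percolation and Bernoulli factors cooperate to yield the bound. \emph{Step 3 (multiscale induction).} Propagate decay of finite-volume resolvents from scale $L_k$ to scale $L_{k+1}=L_k^\alpha$ via the Bourgain-Kenig bootstrap, whose hard ingredient is a Wegner-type estimate that circumvents the Bernoulli obstacle using a quantitative unique continuation principle, ``free-site'' resampling, and a covering argument converting eigenvalue motion under Bernoulli flips into usable probabilistic control. \emph{Step 4 (spectral consequences).} Translate the MSA output into pure-point spectrum on $[E_-,E_-+\eta']$ with exponentially decaying eigenfunctions by standard spectral-averaging techniques; the upper band edge $E_+$ is treated symmetrically. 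The statement $\rho(E)D(E)=0$ on the pure-point part is then immediate from (\ref{kubo}): expanding the resolvent in eigenfunctions, every non-resonant term is killed by the $\eps^2$ prefactor, while for a localized eigenfunction $\psi_n$ the second moment $\sum_j|j|^2|\psi_n(j)|^2$ is finite, so the resonant contribution too is killed by $\eps^2$.

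The chief obstacle is the Wegner estimate of Step 3. Bernoulli distributions do not allow differentiation in the disorder parameters, and all smeared-out arguments fail. The strategy is to split the randomness at each scale into a Bernoulli sector (site occupation and sign of $w_j$) and a smooth sector (spin directions on Eu-occupied sites). In the smooth sector a Wegner bound follows from routine spectral averaging in the compact $S^2$-valued variables, provided the box under consideration contains sufficiently many Eu sites; in boxes that are too sparse in Eu, one must instead invoke the Bourgain-Kenig free-site/unique-continuation argument, applied now only to the Bernoulli variables. Merging these two estimates into a single multiscale induction, so that the output of each scale feeds uniformly into the next irrespective of which sector was exploited in the Wegner step, is in our view the genuinely delicate technical point.
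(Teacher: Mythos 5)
Your skeleton is right---multiscale analysis, a Lifshitz-tail initial estimate from rare aligned configurations, and pure point spectrum via polynomially bounded generalized eigenfunctions combined with exponential Green-function decay. But the Wegner estimate, which you correctly flag as the delicate point, is handled in a way that would not work, and differs essentially from what the paper does.

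You propose a Bourgain--Kenig bootstrap with a quantitative unique continuation principle and free-site resampling to tame the Bernoulli occupation variables. That machinery does not transfer to lattice Schr\"odinger operators: unique continuation in the required quantitative form fails on $\ZZ^3$, and the paper says so explicitly (``the Bernoulli problem has been solved in the continuum, but with methods that do not extend to lattice operators''). The paper sidesteps the Bernoulli obstacle entirely. The key observation is that in regime (A) the random Zeeman term $-J\vec{m}_j\cdot\vec{\sigma}$ carries genuinely \emph{continuous} randomness---the spin directions $\vec{m}_j$ are i.i.d.\ with a bounded density on $S^2$---and that is enough to run an \emph{inductive} Wegner estimate \`a la Bourgain \cite{bourgain09}. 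Concretely: restrict attention to configurations $\omega$ containing few ``bad'' $l_n$-subcubes; modify $\omega$ \emph{only inside} those bad subcubes to produce one explicit reference configuration $\omega'$ for which $\norm{G_{\L_{l_{n+1}}}(E,\omega')}$ is controlled; then invoke a matrix-valued, several-variable Cartan-type lemma (stated in the appendix) to conclude that the set of $\omega$ violating the Wegner bound has exponentially small measure. Your phrase ``routine spectral averaging in the compact $S^2$-valued variables'' undersells what happens: ordinary spectral averaging requires the random coupling to enter as a rank-one scalar perturbation, which is not the structure here (the perturbation is a $2\times 2$ matrix per site), and the Cartan lemma is precisely what replaces it. The Bernoulli variables (site occupation, sign of $w_j$) never need to be resolved by the Wegner step at all.

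Your hybrid plan---spectral averaging in Eu-dense boxes, unique continuation in Eu-sparse boxes---is also unnecessary. Near the band edges, Eu-sparse boxes have no spectrum to begin with (the Lifshitz-tail mechanism you use for the initial estimate applies at every scale), so the ``bad'' boxes that threaten the induction are precisely the Eu-rich, nearly aligned ones, where the continuous spin randomness is abundant. Finally, a small mismatch: the suppression of large clusters below the percolation threshold ($x<x_c$) is what the paper needs for Theorem 2 (regime (B), $\kappa\to\infty$), not for Theorem 1; in regime (A) the spins at Eu sites are independent, so the rarity of near-band-edge eigenvalues comes from the improbability of many independently drawn spins aligning, not from percolation smallness.
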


In regime (B):

\begin{thm}For $x<x_c$, the spectrum of $h(\omega)$ close to the band edges is almost surely pure point, and the eigenfunctions decay exponentially fast. A fortiori,  $\rho(E)D(E)$ vanishes for $E$ in the pure point spectrum.
\end{thm}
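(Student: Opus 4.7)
The plan is to prove Theorem 2 by a multiscale analysis (MSA) adapted to the specific correlation structure of regime (B). In the limit $\kappa\to\infty$, the disorder decomposes cleanly: the Eu/Ca assignment on $\ZZ^3$ is Bernoulli site percolation of density $x<x_c$, and, conditionally on this assignment, each connected Eu-cluster $\mathcal{C}$ carries a single random vector $\vec m_\mathcal{C}\in S^2$, the $\vec m_\mathcal{C}$ being mutually independent and distributed (for small $H$) by a smooth, essentially uniform density on the sphere. The decisive input from $x<x_c$ is subcritical exponential decay of cluster sizes: there exists $c(x)>0$ such that $\PP(|\mathcal{C}(0)|\ge n)\le \e^{-c(x)n}$, so although the spin disorder is fully correlated inside one cluster, the correlated blocks are quantitatively microscopic.

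For a box $\Lambda_L\subset\ZZ^3$, I would introduce the good event $G_L$ on which every Eu-cluster meeting $\Lambda_L$ has diameter at most $(\log L)^2$; the cluster-size tail makes $\PP(G_L^c)$ super-polynomially small in $L$. On $G_L$, the restricted operator $h_{\Lambda_L}(\omega)$ depends only on finitely many independent cluster-parameters $\vec m_\mathcal{C}$ living in a slight enlargement of $\Lambda_L$, each carrying a smooth density on $S^2$. The next step is a Wegner-type estimate
\begin{align*}
\PP\bigl(\mathrm{dist}(\mathrm{spec}(h_{\Lambda_L}(\omega)),E)<\eta\bigr)\le C L^{\alpha}\eta^{\beta}
\end{align*}
for $E$ close to a band edge. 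By Hellmann-Feynman, rotating a single $\vec m_\mathcal{C}$ moves an eigenvalue $E(\omega)$ with gradient $-J\sum_{j\in\mathcal{C}}\langle\psi,\vec\sigma_j\psi\rangle$ for the associated normalized eigenfunction $\psi$; since $\vec m\cdot\vec\sigma$ has non-degenerate $\pm 1$ spectrum for every direction of $\vec m$, a rotation of $\vec m_\mathcal{C}$ transverse to itself produces a first-order eigenvalue displacement bounded below by a positive constant times the $L^2$-mass of $\psi$ on $\mathcal{C}$. A Stollmann-type averaging lemma on $S^2$, applied to the cluster carrying the largest share of $\psi$ and combined with the independence of the $\vec m_\mathcal{C}$ across clusters, then yields the required H\"older continuity in $\eta$.

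Next, I would establish an initial-scale (Lifshitz-tails) estimate: to have an eigenvalue of $h_{\Lambda_{L_0}}(\omega)$ within $\delta(L_0)$ of a band edge, the local values of the potential $w_j$ and the Zeeman term $-J\vec m_j\cdot\vec\sigma$ must be near their extremal values throughout a large sub-region of $\Lambda_{L_0}$; this forces the Bernoulli occupation variables and essentially all cluster spin-directions in the box to take disfavored values, an event whose probability, via a large-deviation bound against the product of the (independent) cluster measures, is at most $\e^{-cL_0^3}$ for an appropriate choice of $\delta(L_0)$. With the Wegner and initial-scale estimates in hand, I would feed them into the Bourgain-Kenig adaptation \cite{bourgain09} of the Fr\"ohlich-Spencer MSA \cite{froehlich83,froehlich85}, which is designed precisely to handle singular and correlated disorder, to obtain exponential decay of the finite-volume Green's function at energies near the band edges on all scales. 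Pure-point spectrum with exponentially decaying eigenfunctions then follows by standard arguments, and the vanishing of $\rho(E)D(E)$ on the point spectrum is immediate from \eqref{kubo}.

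The main obstacle is the Wegner estimate. A single cluster furnishes only two sphere-parameters yet may host an eigenfunction of arbitrary mass profile, so the first-order eigenvalue motion under rotation of $\vec m_\mathcal{C}$ can degenerate along particular directions; ruling this out requires exploiting the non-degeneracy of $\vec m\cdot\vec\sigma$ together with the mutual independence of spin directions across different clusters. The exponential cluster-size decay provided by $x<x_c$ is exactly what ensures that, on the good event $G_L$, a box of side $L$ contains of order $L^3$ independent cluster-parameters, so that the collective eigenvalue motion is abundant even when individual clusters contribute weakly.
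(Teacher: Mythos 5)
Your overall architecture (subcritical cluster-size decay for $x<x_c$, a good event confining clusters to logarithmic diameter, an initial-scale large-deviations/Lifshitz estimate, multiscale analysis, and the standard passage from Green-function decay to pure point spectrum via generalized eigenfunctions) matches the paper's. The genuine gap is in your Wegner estimate, which is also the step the paper identifies as the real difficulty. Your proposed mechanism---Hellmann--Feynman plus a Stollmann-type averaging lemma on $S^2$---rests on the claim that rotating $\vec m_\mathcal{C}$ transversally produces a first-order eigenvalue displacement bounded below by a constant times the mass of $\psi$ on $\mathcal{C}$. This is false, for two reasons. First, the gradient is $-J\sum_{j\in\mathcal{C}}\langle\psi_j,\vec\sigma\,\psi_j\rangle$, a sum of per-site Bloch vectors of length $|\psi_j|^2$ pointing in $j$-dependent directions; the sum can be arbitrarily short compared with $\sum_j|\psi_j|^2$, and can vanish. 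Second, and worse, precisely near the band edge---the only energy region the theorem addresses---the energetically favored spinors are locked parallel to the common cluster direction $\vec m_\mathcal{C}$, so the cluster's total Bloch vector is (nearly) parallel to $\vec m_\mathcal{C}$ and the \emph{transverse} derivative is (nearly) zero: the eigenvalue motion is second order in the rotation angle. There is no global monotone direction in the compact parameter space $S^2$, so Stollmann's lemma does not apply as stated, and you acknowledge this degeneracy as ``the main obstacle'' without resolving it. The abundance of order $L^3$ independent cluster parameters does not help, because the degeneracy can occur simultaneously for every cluster at band-edge energies.

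The paper circumvents this by a genuinely different device: an \emph{inductive} Wegner estimate in the style of Bourgain \cite{bourgain09} (not the continuum Bourgain--Kenig argument, which is \cite{bourgain05} and does not transfer to the lattice). One explicitly constructs, for each admissible configuration $\omega$, a modified configuration $\omega'$---altered only inside enlargements of the few bad $l_n$-cubes---for which $\norm{G_{\L_{l_{n+1}}}(E,\omega')}$ is controlled by the resolvent identity; a matrix-valued Cartan lemma, exploiting analyticity of $h(\omega)-E$ in the spin variables and the bounded density of the spin distribution on the sphere, then promotes this single-configuration bound to a bound holding for all but an exponentially small set of configurations. This requires no first-order non-degeneracy of eigenvalue motion at all, which is exactly why it works where the Hellmann--Feynman route fails. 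To repair your proof you would need either to replace the Stollmann step by such a Cartan/analyticity argument, or to develop a quantitative second-order (sign-indefinite) spectral averaging bound on $S^2$ valid uniformly near the band edge; the latter is not available off the shelf.
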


If the Fermi energy lies in the pure point spectrum, the alloy is an insulator. The region where we can prove pure point spectrum is shown to shrink with increasing $x$ (decreasing disorder), suggesting the conjecture that conduction increases as the absolutely continuous part of the spectrum approaches the Fermi energy.\\

In regime (C), we consider, as discussed, the Bernoulli Hamiltonian (for each subband)
\begin{align}\label{bernoulli}
h(\omega)=-\Delta+v_j(\omega)\,,
\end{align}
where we choose the energy scale such that 
\begin{align*}
v_j(\omega)=\begin{cases}v&\textrm{with prob $x$}\\-v&\textrm{with prob $1-x$}\,.\end{cases}
\end{align*}
\begin{thm}In the band tail from $[-v,-v^2]$, $\rho(E)D(E)$ vanishes for all energies outside a set of the order of $\exp(-\exp(1/\sqrt{v}))$.
\end{thm}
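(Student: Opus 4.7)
The plan is to adapt the methods of Spencer \cite{spencer93} to the three-dimensional lattice Bernoulli Hamiltonian (\ref{bernoulli}). By the Kubo representation (\ref{kubo}), $\rho(E)D(E)$ is controlled by a second-moment expression in the disordered Green's function, so it suffices to show that, for each $E\in[-v,-v^2]$ outside an exceptional set $\mathcal{S}_v$ of Lebesgue measure at most $\exp(-\exp(1/\sqrt{v}))$, there is a uniform-in-$\eps$ decay estimate of the form
\begin{align*}
\EE\,\bigl|\langle 0|(h(\omega)-E-\i\eps)^{-1}|j\rangle\bigr|^2 \le C\,e^{-\gamma|j|},
\end{align*}
with $C,\gamma>0$ depending only on $v$. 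Inserted into (\ref{kubo}), the $\eps^2$ prefactor together with the summability of $|j|^2 e^{-\gamma|j|}$ forces $\rho(E)D(E)=0$.

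I would carry out a multiscale analysis in finite boxes $\Lambda_L\subset\ZZ^3$ with Dirichlet boundary conditions, propagating the decay of the box resolvent $G_{\Lambda_L}(0,y;E+\i\eps)$ from one scale to the next via the Simon--Lieb geometric resolvent identity. The essential difficulty is that the Bernoulli distribution is singular, so the usual Wegner estimate is unavailable. As a substitute I would exploit that $E$ lies deep in the band tail: by a variational argument, any eigenfunction of $h_{\Lambda_L}(\omega)$ with eigenvalue near $E$ must be essentially supported on a connected sub-region where the potential takes the minority value $-v$, and such a region must contain at least $\sim(E+v)^{-3/2}\gtrsim v^{-3/2}$ sites. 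Consequently the probability that $h_{\Lambda_L}(\omega)$ has an eigenvalue within distance $\delta$ of $E$ is bounded by $|\Lambda_L|\cdot\exp(-c\,v^{-3/2})$ up to corrections logarithmic in $\delta$; this Lifshitz-tails input plays the role of a Wegner estimate inside the band tail.

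The induction is then run on a geometric sequence of scales $L_0<L_1<\cdots$, with $L_0$ tuned so that the starting good-event probability supplied by the Lifshitz-tails estimate is close enough to one for the multiscale step to close uniformly across the entire range $[-v,-v^2]$. Scale by scale, one collects an exceptional set of energies consisting of small intervals around those finite-volume eigenvalues in $[-v,-v^2]$ at which resonance cannot be ruled out. Summing these bounds over scales and eigenvalues then yields $|\mathcal{S}_v|\le\exp(-\exp(1/\sqrt{v}))$; the double-exponential structure reflects the two-tier character of the argument, with the inner $\exp(1/\sqrt{v})$ corresponding to the initial length scale needed for Lifshitz rare events to be usable throughout $[-v,-v^2]$, and the outer exponential being the Bernoulli probability of a minority configuration on a region of that linear size. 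The main obstacle is precisely this Wegner-substitute step: the Lifshitz-tails mechanism has teeth only deep in the band tail, which is exactly why the statement cannot be extended past $-v^2$ towards the band center; in that regime the required minority region is too small for Bernoulli rare-event reasoning to be effective, and one would instead have to invoke Bourgain--Kenig-type techniques in the spirit of \cite{bourgain09} as used in regime (A).
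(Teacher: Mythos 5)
Your skeleton is the right one — Kubo formula reduces the claim to Green-function decay, multiscale analysis is the engine, and the whole difficulty is to supply a Wegner-type ingredient for a singular Bernoulli distribution. But you differ from the paper in two places, one of which is a legitimate alternative and the other a genuine gap.

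For the initial density-of-states estimate you propose a Lifshitz-tails variational argument: an eigenvalue in $[-v,-v^2]$ forces a connected region of $\gtrsim (E+v)^{-3/2}\gtrsim v^{-3/2}$ low-potential sites, hence $N(E)\lesssim \exp(-c\,v^{-3/2})$. The paper does something quite different: it performs a renormalized perturbative expansion of the \emph{averaged} resolvent $\EE\,G(E+\i\eps)$ around a self-consistently chosen energy $E_0=-E-\lambda^2\Sigma(E)$, controls the expansion by Wick's theorem and graph counting, and truncates at an optimal order $M\sim(2\lambda)^{-1/2}$ to get $N(E)\le C\,\e^{-\frac12(2\lambda)^{-1/2}}$. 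Your Lifshitz route is plausible and, at the endpoint $-v^2$, even predicts a sharper exponent; the paper's diagrammatic route is cruder but it is what is actually carried out and it produces the particular $v^{-1/2}$ inner exponent appearing in the theorem. Either can feed the subsequent steps, so this is a genuinely different but defensible choice.

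The gap is in the Wegner substitute and the assembly of the exceptional energy set. You write that one ``collects an exceptional set of energies consisting of small intervals around those finite-volume eigenvalues at which resonance cannot be ruled out'' and then ``sums these bounds over scales and eigenvalues''. That is not yet an argument: there are $\sim|\Lambda_{l_n}|$ finite-volume eigenvalues, and a naive union bound over them at resolution $\eps_n=\e^{-l_n^{1/2}}$ does not close. The paper's crucial input, taken from Spencer \cite{spencer93}, is the observation that $\PP[\mathrm{dist}(E,\sigma(h_{\Lambda_l}))\le\kappa]\le\mu_l(E-\kappa,E+\kappa)$ with $\mu_l$ the expected eigenvalue-counting measure, together with the elementary Fubini lemma stated in the text: the set of $E$ where $\mu(E-\eps,E+\eps)\ge\eps^{1/2}$ has Lebesgue measure at most $2\mu(I)\eps^{1/2}$. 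It is this lemma that converts the small total mass $\mu(I)\le|\Lambda|N(E^*)$ into a small excised energy set, uniformly over configurations; without it there is no control on the size of $\mathcal{S}_v$. You should also correct the attribution of the outer exponential: it does not come from a Bernoulli rare-event probability on a region of linear size $L_0$, but from the Wegner resolution $\eps_0=\e^{-l_0^{1/2}}$ required at the initial scale, which enters the singular-set lemma as $\eps_0^{1/2}$; the DOS bound is spent earlier, in permitting $l_0$ to be chosen exponentially large in $v^{-1/2}$ while keeping $|\Lambda_{l_0}|N(E)\ll1$. The scalings happen to coincide, but the mechanism you describe is not the one that produces the stated bound.
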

This is a perturbative result, and is meaningful only for small values of $v$.

\section{Outline of proofs}
Although rigorous proofs of the theorems stated in the previous section are somewhat beyond the scope of this article (see \cite{bourgain09,doktorarbeit} for details), we try to convey the main ideas underlying these proofs in the following.

The resolvent, or Green function, $G(E)=(h-E)^{-1}$, is closely related to the unitary time evolution, $\e^{-\i th}$, via the Laplace transform
\begin{align*}
G(z)\psi=\i \int_0^{\infty}\e^{\i tz}\e^{-\i th}\psi \d t,
\end{align*}
for $\mathrm{Im}\,z>0$. It is therefore not surprising that detailed knowledge of the resolvent is a means for investigating transport properties of the system described by the Hamiltonian $h$. We consider the random Hamiltonian $h(\omega)$ given in (\ref{eq:fr3}), but for clarity of presentation we will treat the two random terms separately, that is, we first set $w\equiv 0$.

 The Kubo formula
 \begin{align}
 \rho(E)D(E)&=\lim_{\eps\to 0}\frac{2\eps^2}{3\pi}\sum_{x\in\ZZ^3} |x|^2\, \mathbbmss{E} |\langle 0|(h(\omega)-E-\i\eps)^{-1}|x\rangle|^2
 \end{align}
shows that in order to establish absence of diffusion, we have to control
\begin{align*}
\EE |G(E+\i0;0,x)|^2\,,
\end{align*}
with
\begin{align*}
  G(z;0,x):=\langle 0|(h(\omega)-z)^{-1}|x\rangle\,.
\end{align*}
The way we do this is to prove that, for a suitable choice of the energy $E$, depending on disorder, $|G(E+\i0;0,x)|$ decays exponentially for large $|x|$, with probability approaching 1, as $|x|\to \infty$. This will enable us to deduce localization (that is, pure point spectrum of $h$, almost surely) in certain energy regimes.

It is technically convenient to study a regularized version of the full Green function: We restrict the Hamiltonian to finite-size cubes $\L\subset\ZZ^3$ (with appropriate boundary conditions), with the intention of sending $\L\nearrow\ZZ^3$ in the end. That is, we consider the matrix Hamiltonians
\begin{align*}
h_{\L_l}=-\Delta_{\L_l}+\11_{\L_l}(j)(w_j-J\vec{m}_j\cdot\vec{\sigma})\,,
\end{align*}
where $\Delta_{\L_l}$ is the finite-difference Laplacian restricted to $\L_l\subset\ZZ^3$, a cube of side length $l$, and $\11_{\L_l}$ is the characteristic function of $\L_l$. We then define
\begin{align*}
  G_{\L_l}(z):=(h_{\L_l}-z)^{-1}\,.
\end{align*}

The two main techniques used in our proofs are perturbation theory around ``good'' cubes, using the second resolvent equation, and induction on the scale of the cubes---a so-called ``multiscale analysis''. A ``good'' cube is defined as follows. \\
\textbf{Definition.} A cube $\L_l$ is called ``good'' at energy $E$ if the resolvent $G_{\L_l}(E)$ is well-behaved, that is, if there exists a constant $c>0$ such that
\begin{align}\label{wegner1}
&\norm{G_{\L_l}(E)}<\e^{l^{1/2}}\\
&|G_{\L_l}(E;x,y)|<\e^{-c|x-y|} \qquad \textrm{for }\; |x-y|\geq\frac{l}{10}\,.\label{offdiag}
\end{align}
If a cube is not ``good'', it is called ``bad''. Condition (\ref{wegner1})---a so-called Wegner-type estimate---ensures that $E$ does not lie too close to an eigenvalue of the matrix $h_{\L_l}$, and condition (\ref{offdiag}) represents the off-diagonal decay typical for the resolvent of a Schr\"odinger operator for $E$ in a spectral gap.\\
\subsection{Multiscale analysis}
The main idea of multiscale analysis is simple. The various scales are the sidelengths $l$ of the cubes $\L_l$. We establish existence of ``good'' cubes with high probability at an initial scale, and use this information to show existence of ``good'' cubes with even higher probability, inductively, at larger scales. The probability that a cube $\L_l$ is ``good'' should behave like $1-p_l$, where $p_l\propto l^{-k}$, and $k$ is some integer. The key point is that, at finite scales, we have to prove existence of ``good'' cubes only with high probability and not with probability one: Denote by $\Omega$ the whole probability space, that is, the set of all configurations $\omega$ of the random magnetic moments $\{\vec{m}_j\}_{j\in\ZZ^3}$. \textit{From the outset, we can discard configurations $\omega$ that are difficult to handle, as long as they have small probability.}\\
\textbf{Induction step.} The induction step from scale $l_n$ to $l_{n+1}\lesssim l_n^2$ proceeds as follows: Consider a cube $\L_{l_{n+1}}$ of size $l_{n+1}$. Since ``bad'' cubes of size $l_n$ are, by the induction hypothesis, improbable, the probability that there are \textit{many}, say $N$, ``bad'' $l_n$-cubes in a $l_{n+1}$-cube is even smaller, of the order of $p_{l_n}^N$. The integer $N$ is chosen such that $p_{l_n}^N<p_{l_{n+1}}$. \textit{Therefore, only configurations $\omega$ where the $l_{n+1}$-cube contains not too many ``bad'' $l_n$-cubes need to be considered.} 

Next, we describe the basic perturbation step. Consider a set $X\subset\ZZ^d$ and $Y\subset X$. The second resolvent identity yields
\begin{align*}
G_X=G_Y\oplus G_{X\backslash Y}+G_Y\oplus G_{X\backslash Y}\Gamma G_X\,,
\end{align*}
where $\Gamma$ is the boundary operator linking $Y$ to $X\backslash Y$. For example, for $x\in Y$ and $y\in X\backslash Y$, we have that
\begin{align*}
G_X(E;x,y)=\sum_{(z,z')\in\partial Y}G_Y(E;x,z)G_X(E;z',y)\,.
\end{align*}
To get an estimate on $|G_{\L_{l_{n+1}}}(E;x,y)|$ we proceed in the following way. In a first step, we excise the ``bad'' $l_n$-cubes (of which there are not too many) from $\L_{l_{n+1}}$. Iteration of the resolvent identity along a sequence of $l_n$-cubes the first of which is centered at $x$, the second on the boundary of the first, and so on until the last contains $y$, and the induction hypothesis (\ref{offdiag}) at scale $l_n$ establish the estimates (\ref{wegner1}) and (\ref{offdiag}) at scale $l_{n+1}$ for $\L_{l_{n+1}}$, with the ``bad'' cubes excised. The difficulty is now that, to couple a ``bad'' cube $\L_{l_n}$ back to $\L_{l_{n+1}}$, we need a slightly larger cube covering $\L_{l_n}$ that satisfies a Wegner-type estimate (\ref{wegner1}). But, by the induction hypothesis, the Wegner estimate holds for cubes of this size only with probability $1-p_{l_n}$, whereas we should like to establish it with the much larger probability $1-p_{l_{n+1}}$.

For random potentials with a \textit{bounded} probability density, it has been known for a long time \cite{wegner81} how to establish a Wegner-type estimate \eqref{wegner1} with probability $1-\e^{-l^{1/2}}$ \textit{simultaneously} on all scales $l$. After the inception of the rigorous study of Anderson localization \cite{froehlich83}, much effort has been devoted to the study of Wegner estimates, and there have been many advances, particularly in the continuum (see \cite{hislop02} and references therein). On the lattice however, for more general probability distributions, or, as in our case, matrix-valued random potentials, Wegner estimates are difficult to establish. Recent mathematical work \cite{bourgain09}, triggered by our study of the hexaboride alloys, shows how to establish a Wegner estimate inductively.\\
\textbf{Inductive Wegner estimate.} We have already argued that we can restrict our attention to configurations $\omega$ where there are not too many ``bad'' cubes of size $l_n$ in $\L_{l_{n+1}}$---call these ``bad'' cubes $B_k$. The key idea is to modify each configuration $\omega$ by changing in $\omega=\{\vec{m}_j\}_{j\in\L_{l_{n+1}}}$ only the values of $\vec{m}_j$ for $j$ \textit{inside} the $B_k$ such that, for this new configuration $\omega'$, each ``bad'' cube has a neighbourhood satisfying a Wegner estimate. In a second step, one shows with the help of complex analysis that these configurations $\omega'$ have actually very large probability.

Technically, this is done as follows (see also Figure \ref{figcube}):
\begin{figure}[h]\center
\psfrag{Q}{$B$}
\psfrag{Qi}{$B_{(i)}$}
\psfrag{n}{$\L_{l_n}$}
\psfrag{n+1}{$\L_{l_{n+1}}$}
\includegraphics[scale=0.5]{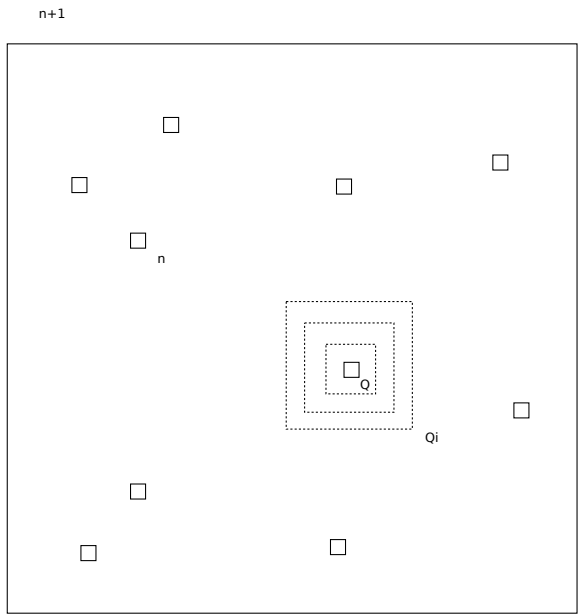}
\caption{Generic cube of size $l_{n+1}$ with not too many ``bad'' cubes of size $l_n$.}\label{figcube}
\end{figure}
Pick one of the ``bad'' cubes, call it $B$. By the induction hypothesis, we know that a neighbourhood of a ``bad'' cube $B$ satisfies a Wegner estimate with probability $1-p_{l_n}$ (the fact that such a neighbourhood is actually slightly larger than $l_n$ is a technical nicety and of minor importance). Now, cover the ``bad'' cube $B$ with a slightly larger cube $B_{(1)}$. Then the probability that the configuration in the ring $B_{(1)}\backslash B$ makes the cube $B_{(1)}$ ``bad'', no matter what the configuration inside $B$, is smaller than $p_{l_n}$. Thus, the probability that there are \textit{many}, say $N$, equicentered cubes $B_{(i)}$ of increasing diameter such that, for all $i$, the configuration in the ring $B_{(i)}\backslash B_{(i-1)}$ makes the cube $B_{(i)}$ ``bad'', no matter what the configuration in the interior, $B_{(i-1)}$, is very small, namely of the order of $p_{l_n}^N$. We can therefore, from the outset, restrict considerations to configurations $\omega$ where each bad cube $B$ can be replaced by a larger cube $B_{(i_0-1)}$ with the property that the configuration $\omega$ can be modified \textit{inside the cube $B_{(i_0-1)}$ alone} such that the cube $B_{(i_0-1)}$ has a ``good'' neighborhood $B_{(i_0)}$. The modified configuration shall be denoted by $\omega'$.

We can now use again the resolvent identity to establish the bound \eqref{wegner1} on $\norm{G_{\L_{l_{n+1}}}(E,\omega')}$. We have thus found that $\norm{G_{\L_{l_{n+1}}}(E)}$ is bounded at scale $l_{n+1}$ for \textit{one fixed configuration $\omega'$}. One may think that this is far too little, since a single point $\omega'$ has zero probability in $\Omega$. However, in a last step, one shows, using a matrix-valued Cartan-type lemma, that the probability of configurations satisfying the estimate (\ref{wegner1}) at scale $l_{n+1}$ is very large.

To understand the last step we need to recall a result from complex analysis, known as Cartan's lemma. The precise mathematical statement can be found in the appendix. For our purposes, a rough understanding of the lemma will suffice: The lemma says that an analytic function that is bounded away from zero at \textit{one} point of its domain of definition is at most points not too close to zero. We need a higher-dimensional matrix-valued analog of this lemma due to Bourgain \cite{bourgain09}, which we relegate to the appendix. This generalized lemma says that if $\norm{G_{\L_{n+1}}(E;\omega)}$ is not too large for \textit{one} configuration $\omega'$, it is not too large for most configurations; actually only exponentially few (in $l_n$) configurations have to be excluded. \textit{It is important to point out that in order to apply this lemma the distribution of magnetic moments on the unit sphere needs to have a bounded density with respect to the uniform measure on the sphere.}

We can now apply the lemma, for we have explicitly constructed a configuration $\omega'$ with the desired properties.

Using \eqref{wegner1}, the off-diagonal decay \eqref{offdiag} is easily established as before by expanding the resolvent along a sequence of nested cubes.

The induction step is thus completed, because we have verified that \eqref{wegner1} and \eqref{offdiag} hold with probability at least $1-p_{l_{n+1}}$ at scale $l_{n+1}$.

\textbf{Induction basis.} To complete the proof we need to verify conditions \eqref{wegner1} and \eqref{offdiag} at some initial scale $l_0$. The proof is based on the following intuition. If a configuration $\omega$ is to be such that $G_{\L_{l_0}}(E,\omega)$ has an eigenvalue close to the upper edge of the spectrum $E_0=4d+|m|$, then, first, most of the sites in $\L_{l_0}$ have to be occupied by a magnetic moment, and, second, most of nearest-neighbor moments have to be closely aligned, to maximize the hopping term. But since the moments are independent, a large-deviations estimate shows that such a configuration occurs with very small probability. (Here we use the fact that the probability of large clusters is exponentially small in the cluster size, for $x<x_c$.)\\
\textbf{Absence of diffusion.} From the exponential decay of $G_{\L_l}(E)$, with probability $\simeq 1-\textrm{const}\,l^{-k}$, we get \textit{absence of diffusion}, using that
\begin{align*}
\lim_{\L\nearrow\ZZ^3}G_{\L}(E;x,y)=G(E;x,y)\,,
\end{align*}
and the Kubo formula (\ref{kubo}) for the diffusion constant $D$.
\\
\textbf{Pure point spectrum.} Call $I$ the set of energies where we can prove almost sure exponential decay of $G(E;x,y)$ (that is, $I$ lies in the band tails). To prove pure point spectrum in $I$, we need additional ideas, see \cite{froehlich85,bourgain05}. The first is that, with respect to the spectral measure of $h$, almost every energy is a \textit{generalized} eigenvalue, that is, a polynomially bounded solution to the equation
\begin{align*}
(h-E)\psi=0\,.
\end{align*}
Next, we note that we can express such a solution by the Green function,
\begin{align*}
\psi(x)=\sum_{(z,z')\in\partial R_n}G_{R_n}(E;x,z)\psi(z') \qquad \forall x\in R_n\,,
\end{align*}
where $R_n$ is most conveniently chosen a ring, say $\Lambda_{l_{5n}}\backslash\L_{l_n}$. The exponential decay of the Green function in $R_n$ implies thus exponential decay of $\psi(x)$ for $x$ large enough, which, in turn, means that $\psi$ is a true eigenfunction, and, a fortiori, that $E$ is an eigenvalue.

\subsection{Bernoulli random variables}
If the random potential is of Bernoulli-type, the above inductive method for proving a Wegner estimate breaks down, since the ``density'' of the distribution is in this case unbounded. However, there is an old method \cite{spencer93} (see also \cite{elgart09}) that gives partial results. We should also mention that the Bernoulli problem has been solved in the continuum \cite{bourgain05}, but with methods that do not extend to lattice operators.\\

In order to present the main ideas of our analysis, we introduce the function $N_\L(E,\omega)$ which counts the number of eigenvalues of $h_\L(\omega)$ less than $E$ (recall that the Hamiltonian studied in this section is given by (\ref{bernoulli})). In the thermodynamic limit, an ergodic theorem guarantees the existence of the so-called \textit{integrated density of states}
\begin{align*}
N(E)=\lim_{\L\nearrow\ZZ^3}\frac{N_\L(E,\omega)}{|\L|}\,,
\end{align*}
which is independent of $\omega$ almost surely. Its derivative in $E$ is the density of states and is denoted by $\rho(E)$. We can express $N_\L(E,\omega)$ by the resolvent $G_\L(E,\omega)$ by noting that
\begin{align*}
N_\L(E,\omega)=\tr \mathrm{P}_\L(E,\omega)=\sum_{x\in\L}\scalar{\delta_x}{\mathrm{P}_\L(E,\omega)\delta_x}\,,
\end{align*}
where $\mathrm{P}_\L(E,\omega)$ denotes the spectral projection of $h_\L(\omega)$ onto the interval $(-\infty,E)$, and
 \begin{align*}
\mathrm{P}_{\varLambda}(E,\omega)=\underset{\eps\rightarrow 0}{\mathrm{s-}\lim}\frac{1}{\pi} \int_{-\infty}^E \mathrm{Im}\,G_{\varLambda}(E'+\i\eps,\omega)\d E'\,.
\end{align*}
Using that
\begin{align*}
\frac{2}{\pi} \int_{-\infty}^E \d E' \frac{\eps}{(x-E')^2+\eps^2} = \frac{2}{\pi} \arctan(\frac{E-x}{\eps})+1\geq \Theta(E-x)\,,\quad \forall \eps>0\,,
\end{align*}
we see that the spectral theorem provides the bound
\begin{align}\label{boundondensity}
N(E)=\mathbbmss{E}\, \scalar{\delta_{x_0}}{\mathrm{P}(E,\omega)\delta_{x_0}}\leq \frac{2}{\pi}\int_{-\infty}^E\d E'\, \mathrm{Im}\, \mathbbmss{E}\, G(E'+\i\eps;x_0,x_0)\,,\quad \forall \eps>0\,,
\end{align}
the first equality following from translation invariance. We can bound the probability that an eigenvalue of $h_\L$ exists below $E$ in terms of $N(E)$,
\begin{align*}
\mathbbmss{P}[N_{\varLambda}(E)\geq 1]=\mathbbmss{E} \mathbbmss{1}_{\{N_{\varLambda}(E)\geq 1 \}}\leq \mathbbmss{E} N_{\varLambda}(E) \mathbbmss{1}_{\{N_{\varLambda}(E)\geq 1 \}} \leq \mathbbmss{E}\,N_{\varLambda}(E) \,,
\end{align*}
and since Dirichlet boundary conditions raise eigenvalues we have that
\begin{align*}
\mathbbmss{E}\,N_{\varLambda}(E) \leq |\L|N(E)\,.
\end{align*}
The main idea is now to show that the integrated density of states in the band tails is so small that with very high probability, $h_\L$ does not have an eigenvalue below $E$. Then we can invoke the following well-known lemma to prove exponential decay of the off-diagonal elements.
\begin{lem*}[Combes-Thomas]
Whenever $h_{\L}$ has no spectrum below $E_1$ then for $E<E_1$,
\begin{align}\label{simon}
|G_{\varLambda}(E;x,y)|\leq \frac{2}{\delta} \mathrm{exp}(-\textnormal{const}\sqrt{\delta} |x-y|),
\end{align}
where $\delta:=|E-E_1|$. 
\end{lem*}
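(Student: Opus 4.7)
My plan is to exploit a real exponential similarity transformation (the ``Combes--Thomas trick''). Fix distinct $x, y \in \L$, set $\hat{e} := (x - y)/|x - y|$, and introduce the self-adjoint positive multiplication operator $U_\alpha$ defined by $(U_\alpha \psi)(z) := \exp(\alpha\,\hat e\cdot z)\,\psi(z)$ for a real parameter $\alpha > 0$ to be optimized below. Consider the similarity-transformed operator $h_\alpha := U_\alpha h_\L U_\alpha^{-1}$: its spectrum coincides with that of $h_\L$, the diagonal potential is unchanged, and the nearest-neighbor hopping entries pick up factors $\exp(\alpha\,\hat e\cdot(z - z'))$, of modulus at most $e^\alpha$ since $|\hat e\cdot(z - z')| \leq 1$.

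The heart of the argument is that the self-adjoint part of $h_\alpha - h_\L$ is of order $\alpha^2$, not $\alpha$. Indeed, the off-diagonal entries of $S_\alpha := \tfrac12(h_\alpha + h_\alpha^*) - h_\L$ equal $-(\cosh(\alpha\,\hat e\cdot (z-z')) - 1)$ at nearest-neighbor sites, and Schur's test yields
\begin{align*}
\norm{S_\alpha} \leq 2d\,(\cosh\alpha - 1) \leq C\alpha^2 \qquad (\alpha \leq 1).
\end{align*}
The complementary antisymmetric piece of $h_\alpha - h_\L$ is of order $\alpha$, but being skew it drops out of $\mathrm{Re}\,\scalar{\psi}{(h_\alpha - E)\psi}$. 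Combining with the hypothesis $h_\L \geq E_1$ gives, for $E < E_1$,
\begin{align*}
\mathrm{Re}\,\scalar{\psi}{(h_\alpha - E)\psi} \geq (\delta - C\alpha^2)\,\norm{\psi}^2.
\end{align*}

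Now choose $\alpha := c\sqrt{\delta}$ with $c$ small enough that $C\alpha^2 \leq \delta/2$; Cauchy--Schwarz then gives $\norm{(h_\alpha - E)\psi} \geq (\delta/2)\norm{\psi}$, so $\norm{(h_\alpha - E)^{-1}} \leq 2/\delta$. The conclusion \eqref{simon} follows from the similarity identity
\begin{align*}
G_\L(E;x,y) = \scalar{\delta_x}{U_\alpha^{-1}(h_\alpha - E)^{-1}U_\alpha\,\delta_y} = \exp(\alpha\,\hat e\cdot(y-x))\,\scalar{\delta_x}{(h_\alpha - E)^{-1}\delta_y},
\end{align*}
since $\hat e\cdot (y-x) = -|x-y|$ by construction.

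The only delicate point is the $O(\alpha^2)$ bound on $S_\alpha$: the cancellation $\cosh(\alpha t) - 1 = O(\alpha^2)$---with the first-order $\sinh$ contributions absorbed into the antisymmetric piece and thereby invisible to the quadratic-form estimate---is precisely what permits $\alpha$ to scale like $\sqrt{\delta}$ rather than $\delta$, producing the claimed $\sqrt{\delta}$ rate. A cruder bound $\norm{h_\alpha - h_\L} = O(\alpha)$ would give only exponential decay at a rate proportional to $\delta$, which is weaker for small $\delta$.
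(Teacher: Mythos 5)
Your argument is correct: it is the standard Combes--Thomas conjugation proof (exponential tilting $U_\alpha$, the observation that the self-adjoint part of $h_\alpha-h_\L$ is $O(\alpha^2)$ because the $\sinh$ terms are skew-adjoint and drop out of the real part of the quadratic form, then $\alpha\sim\sqrt{\delta}$), which is exactly the "well-known" proof the paper invokes without writing it out, so there is nothing to compare against. Two cosmetic remarks: the step from $\mathrm{Re}\,\scalar{\psi}{(h_\alpha-E)\psi}\geq(\delta/2)\norm{\psi}^2$ to $\norm{(h_\alpha-E)^{-1}}\leq 2/\delta$ uses that $\L$ is finite (injectivity implies invertibility), which holds here; and the constraint $\alpha\leq 1$ means the rate $\const\sqrt{\delta}$ is valid in the relevant regime of $\delta$ bounded (band tails), while for large $\delta$ the lattice rate saturates logarithmically.
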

Recalling equation (\ref{boundondensity}), we see that, in order to get a bound on the integrated density of states, we have to bound the imaginary part of the averaged Green function. The strategy is simple. If $v$ is small the first thought is to expand the resolvent $(-\Delta+v(\omega)-E-\i\eps)^{-1}$ in powers of $v$ around $(-\Delta-E)^{-1}$. But since we consider the \textit{average} $\EE (-\Delta-v(\omega)-E)^{-1}$, there may be an optimal energy $E_0$ around which to expand. This energy is found as follows. We introduce a perturbation parameter $\l$, writing
\begin{align*}
h=-\Delta+\l v\,,
\end{align*}
with $v=\{v_j\}_{j\in\ZZ^3}$ a collection of independent Bernoulli random variables, $v_j=\pm 1$ with probability $\frac{1}{2}$, each; (for simplicity, we consider the symmetric case, $x=\frac{1}{2}$; but the general case is hardly more difficult).  Formally, we have that
\begin{align*}
G(E+\i\eps)&=\frac{1}{-\Delta+\l v-E-\i\eps}=\frac{1}{-\Delta+E_0-\i\eps+\l v+(-E_0-E)}\\
&=G_0\sum_{n\geq 0}((-\l v+E+E_0)G_0)^n\,,
\end{align*}
where we have introduced the unperturbed Green function
\begin{align*}
G_0(-E_0+\i \eps) = (-\Delta + E_0-\i \eps)^{-1} \,.
\end{align*}
The first terms are
\begin{align*}
G&=G_0-G_0[\l v-(E+E_0)]G_0+\\
&+G_0[\l v-(E+E_0)]G_0[\l v-(E+E_0)]G_0+\dots
\end{align*}
Using that $\EE v=0$ and $\EE v^2=1$, we obtain
\begin{align*}
\EE G=G_0+G_0^2[E+E_0]+\l^2 G_0(0,0) G_0^2+G_0^3[E+E_0]^2+\dots\,.
\end{align*}
Thus, we see that, in order for the $\l^2$-term to vanish, we must choose
\begin{align}\label{motivation}
E+E_0=-\l^2G_0(-E_0+\i\eps;0,0)=\mathcal{O}(\l^2)
\end{align}
to arrive at $\EE G=G_0+\mathcal{O}(\l^4)$.\\

To make the above considerations mathematically respectable, we iterate the second resolvent identity 
\begin{align*}
G(E+\i \eps;x_0,x) &= G_0(-E_0+\i \eps;x_0,x)+ \\
&\sum_y G_0(-E_0+\i \eps;x_0,y) [-\lambda v +(E+E_0)] G(E+\i \eps;y,x) \,,
\end{align*}
with $0<E_0=-E-\lambda^2 \Sigma(E)$, where $\lambda^2 \Sigma(E)$ is motivated by (\ref{motivation}) and is defined in a self-consistent way by
\begin{align*}
\lambda^2 \Sigma(E) = \lambda^2(-\Delta -E-\lambda^2 \Sigma(E) -\i \eps)^{-1}(0,0) \,.
\end{align*}
Since $E_0$ has to be positive in order for $G_0(E_0)$ to exist we have as an upper limit for $E$
\begin{align}\label{Estar}
E^*=\l^2 \Delta^{-1}(0,0)<0\,.
\end{align}
Next, we iterate the resolvent identity $M$ times with the intention of optimizing the truncation parameter $M$, later on, to minimize the remainder term. Setting $W:=-\lambda v -\lambda^2 \Sigma$ and $G^\eps:=G(E+\i\eps)$ we get
\begin{align*}
G^\eps = \sum_{m=0}^M G^\eps_0 \left[W G^\eps_0\right]^m + G^\eps_0 \left[W G^\eps_0 \right]^M W G^\eps \,.
\end{align*}
The dangling factor $G^\eps$ in the remainder is estimated trivially by $1/\eps$. The key observation is that the imaginary part of the first $M+1$ terms on the right-hand side can be shown to be proportional to $\eps$, whereas we will indicate below how to prove that the remainder term multiplying $G^\eps$ is of order $\e^{-(2\l)^{-1/2}}$. Thus we choose $\eps^2=\e^{-(2\l)^{-1/2}}$ and get the estimate
\begin{align*}
\mathrm{Im}\, \mathbbmss{E}\, G(E'+\i\eps;x_0,x_0)\leq \e^{-\frac{1}{2}(2\l)^{-1/2}}\,.
\end{align*}
In computing the terms of the form
\begin{align*}
&\mathbbmss{E}[G_0 (\l v G_0)^m](x_0,x)\\
=&\mathbbmss{E}[\sum_{x_1,\dots,x_m}G_0(x_0,x_1)\l v_{x_1} G_0(x_1,x_2)\dots \l v_{x_m} G_0(x_m,x)]\,,
\end{align*}
 it is easiest to use a graphical representation: $G_0(x,y)$ corresponds to a line joining $x$ and $y$, while the interaction $v$ corresponds to a vertex. Averaging over the randomness yields terms represented by graphs obtained by fusing an even number (since expressions involving an odd number of $v$ vanish upon averaging) of vertices at a time until none remains unpaired. Because 
$$\begin{array}{c}
1=\mathbbmss{E} v^{2n}\ll \gamma(n) \(\mathbbmss{E} v^2\)^n=\gamma(n)\\
\gamma(n)=\frac{(2n)!}{2^n n!}=\textnormal{number of full pair contractions}
\end{array}$$
we can use Wick's theorem to get an upper bound by considering only fusions of {\it pairs} of vertices. It is well known how to bound such graphs: Consider the bubble graph:
\begin{align*}
\lambda^2 \sum_x |G_0(0,x)|^2 &\leq C\lambda^2 \int \d^3 x\frac{\e^{-2\sqrt{E_0}|x|}}{(1+|x|)^2}\\
&=\lambda^2 \frac{C}{\sqrt{E_0}}\int\d^3x \frac{\e^{-2|x|}}{(\sqrt{E_0}+|x|)^2}\leq C \left(\frac{\lambda^2}{\sqrt{E_0}}\right) =:A\,.
\end{align*}
Strings of $n$ such bubbles are of size $A^n$ and add up to $A/(1-A)$ (convergence only holds if $E_0>\lambda^{4-2\delta}$).\\
It turns out that a graph of size $M$ is of the order of
\begin{align*}
\(\frac{\l^2}{\sqrt{E_0}}\)^M\,, \textnormal{ up to logarithmic corrections.}
\end{align*}
There are less than $\frac{(2M)!}{2^MM!}\simeq 2^M(M/\e)^M\sqrt{2}<2^MM!$ graphs generated by contracting pairs of $2M$ vertices, so the sum of all graphs is bounded by
\begin{align*}
2^MM! \(\frac{\l^2}{\sqrt{E_0}}\)^M\leq M!(2\l)^{M\delta}\,.
\end{align*}
We now optimize our choice of the truncation parameter $M$: Choosing $M=(2\l)^{-\delta}$
we get the bound
\begin{align*}
M!M^{-M}\leq C \e^{-M}=C\e^{-(2\l)^{-\delta}}\,.
\end{align*}
By similar calculations, collecting our previous estimates and setting $\delta=1/2$, we obtain the following upper bound for the integrated density of states,
\begin{align*}
N(E)\leq C\e^{-\frac{1}{2}(2\l)^{-1/2}}\,.
\end{align*}
We have thus shown that with probability larger than
\begin{align}\label{one}
1-|\L|C\e^{-\frac{1}{2}(2\l)^{-1/2}}
\end{align}
there is off-diagonal exponential decay of the resolvent $G_\L(E)$, for $E\in [-\l,-\l^2]$, see (\ref{Estar}). Therefore, for any initial scale $l_0$ there is a $\l$ such that (\ref{offdiag}) holds with high probability. Having established condition (\ref{offdiag}) of the multiscale analysis, we are looking for a Wegner estimate. Because of the very singular nature of the Bernoulli potential, the inductive scheme devised in \cite{bourgain09} does not work. In \cite{spencer93}, the following trick was introduced: For each scale $l_n$, we define $\mu(J)=\mu_{l_n}(J)$ to be the expected number of eigenvalues of $h_{\L_{l_n}}$ in an interval $J$. Next, we introduce the classical form of the Wegner estimate, and show that it implies (\ref{wegner1}). Since
\begin{align*}
\mathbbmss{P}[\norm{G_{\L_l}(E)}\geq \e^{l^{1/2}}]=\mathbbmss{P}[\mathrm{dist}(E,\sigma(h_{\varLambda_l}))\leq \e^{-l^{1/2}}],
\end{align*}
it is clear that (\ref{wegner1}) holds with probability $p_l=1-l^{-k}$ if we can show that
\begin{align*}
\mathbbmss{P}[\mathrm{dist}(E,\sigma(h_{\varLambda_l}))\leq \kappa]\leq C|\L_l|\kappa^{1/2}\,.
\end{align*}
The following easy estimate shows how we have to proceed:
\begin{align*}
\mathbbmss{P}[\mathrm{dist}(E,\sigma(h_{\varLambda_l}))\leq \kappa] \leq \mathbbmss{E} N_{\varLambda_l}(E+\kappa) -\mathbbmss{E} N_{\varLambda_l}(E-\kappa)=\mu_l(E-\kappa,E+\kappa)\,.
\end{align*}
We see that (\ref{wegner1}) is fulfilled if we exclude a set of ``singular'' energies, and the following lemma shows that this set of energies has very small measure.
\begin{lem*} Let $\mu$ be a measure on an interval $I$. Let $S$ be the set of energies $E$ for which the measure is singular at scale $\eps$, that is at which
 \begin{align*}
 \mu(E-\eps,E+\eps)\geq \eps^{1/2}\,.
 \end{align*}
If $|S|$ denotes the Lebesgue measure of $S$ then
\begin{align*}
|S|\leq 2 \mu(I) \eps^{1/2}\,.
\end{align*}
\end{lem*}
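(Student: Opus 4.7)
The plan is to prove the bound by a direct Chebyshev/Fubini argument, which yields the stated constant $2$ cleanly and without any recourse to covering lemmas. Morally, the statement is just Chebyshev's inequality applied to the function $E \mapsto \mu(E-\eps,\,E+\eps)$, with the factor $2\eps$ supplied by the Lebesgue measure of a one-dimensional ball of radius $\eps$.

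First I would translate the definition of $S$ into a pointwise bound: by hypothesis,
$$\11_S(E) \;\leq\; \eps^{-1/2}\, \mu(E-\eps,\, E+\eps) \;=\; \eps^{-1/2} \int_I \11_{\{|E-E'|<\eps\}}\, \d\mu(E')$$
for every $E\in\RR$, where in the second equality the interval measure has been rewritten as an integral against $\d\mu$ of an indicator. Next I would integrate this inequality with respect to Lebesgue measure in $E$ and interchange the order of integration by Fubini, which is legitimate since the integrand is non-negative and $\mu$ is finite on the bounded interval $I$. The inner Lebesgue integral $\int \11_{\{|E-E'|<\eps\}}\, \d E$ equals $2\eps$ uniformly in $E'$, so
$$|S| \;=\; \int \11_S(E)\, \d E \;\leq\; \eps^{-1/2} \int_I 2\eps\, \d\mu(E') \;=\; 2\,\eps^{1/2}\, \mu(I),$$
which is the desired conclusion.

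There is essentially no genuine obstacle here: the whole argument is a one-line Chebyshev bound dressed up as a Fubini computation, and no multiscale or geometric covering machinery is needed. The only bookkeeping point concerns intervals $(E-\eps,\,E+\eps)$ that stick out of $I$ when $E$ lies near an endpoint; one may either extend $\mu$ by zero outside $I$ or intersect with $I$, and in either case the pointwise bound above is only strengthened, so the final estimate is unaffected.
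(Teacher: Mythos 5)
Your proof is correct and is essentially identical to the paper's: both integrate the pointwise bound $\eps^{1/2}\11_S(E)\leq\mu(E-\eps,E+\eps)$ over $E$ and apply Fubini, with the inner Lebesgue integral contributing the factor $2\eps$. No further comment is needed.
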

\begin{proof}
An easy application of Fubini's theorem shows that
\begin{align*}
\eps^{1/2} |S| &\leq \int_S \d E \int_I \d\mu(x) 1_{[E-\eps,E+\eps]}(x)\\
&\leq \int_I\d\mu \int_I \d E 1_{[E-\eps,E+\eps]}(x) =\mu(I) 2 \eps\,.
\end{align*}
\end{proof}
Appealing to multiscale analysis, we see that, at each scale $l_n\simeq l_0^{2^n}$, we have to exclude energies of measure $C \exp(-\frac{1}{2} l_0^{\frac{1}{2}2^n})$. Thus the total measure of energies we might have to excise is
\begin{align*}
|E^{\mathrm{exc}}|=|\bigcup_n E^{\mathrm{exc}}_n|\leq \sum_{n=0}^{\infty}C \exp(-\frac{1}{2} l_0^{\frac{1}{2}2^n}) \leq 2C\exp(-\frac{1}{2} l_0^{\frac{1}{2}}) \,.
\end{align*}
Because the integrated density of states is so small, we can choose $l_0$ to be exponentially large in $\lambda^{-1/2}$ (but not larger because of the factor $\L_l\propto l^3$ in (\ref{one})), and hence the set of energies we have to excise is of order $\exp(-\frac{1}{2}\exp(\frac{1}{2}\l^{-1/2}))$.\\

\appendix
\section{Cartan's Lemma}

\begin{lem*}\label{cartanestimate}Let $f(z)$ be a function analytic in the disc $\{z:|z|\leq \e R\}$, $|f(0)|=1$, and let $\eta$ be an arbitrary small positive number. Then we have that
\begin{align*}
\l(\{x\in [-R,R]:|f(x)|< \delta\}) \leq 30\e^3R\delta^\frac{1}{\log M_f(R)}\,,
\end{align*}
where $M_f(r)=\max_{|z|=r}|f(z)|$ and $\l(\cdot)$ denotes Lebesgue measure.
\end{lem*}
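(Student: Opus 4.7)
The proof divides into a zero count via Jensen's formula, a factorization of $f$, the classical Cartan lemma applied to the polynomial factor, and a minimum-modulus estimate for the non-vanishing factor. To begin, apply Jensen's formula to $f$ on the disc $\{|z|\leq \e R\}$; using $|f(0)|=1$ one obtains
\begin{align*}
\sum_{|z_j|<\e R}\log\frac{\e R}{|z_j|}=\frac{1}{2\pi}\int_0^{2\pi}\log|f(\e R\e^{\i\theta})|\,d\theta\leq \log M_f(\e R).
\end{align*}
Each zero in the closed disc of radius $R$ contributes at least $\log\e=1$ to the left-hand side, so the number $N$ of zeros of $f$ with $|z_j|\leq R$ satisfies $N\leq \log M_f(\e R)$, which by Hadamard's three-circles theorem is comparable to $\log M_f(R)$. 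Then factor $f(z)=P(z)g(z)$, where $P(z)=\prod_{|z_j|\leq R}(z-z_j)$ is monic of degree $N$ and $g$ is analytic and non-vanishing on $\{|z|\leq R\}$.

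For the polynomial $P$, I invoke the classical Boutroux--Cartan lemma: for every $H>0$ the sublevel set $\{z\in\CC:|P(z)|<H^N\}$ is contained in a union of discs whose diameters sum to at most $4\e H$. Restricting to the real interval $[-R,R]$ gives the Lebesgue bound
\begin{align*}
\l\bigl(\{x\in[-R,R]:|P(x)|<H^N\}\bigr)\leq 4\e H.
\end{align*}
For the non-vanishing factor $g$, I apply the Borel--Carath\'eodory inequality to the harmonic function $\log|g|$ on the disc of radius $\e R$: using $|g(z)|\leq M_f(\e R)/|P(z)|$ on $|z|=\e R$ together with the explicit value $|g(0)|=|f(0)|/\prod_j|z_j|$ furnished by the factorization, one derives a minimum-modulus bound $\min_{|z|\leq R}\log|g(z)|\geq -C\log M_f(R)$ for some absolute constant $C$.

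Assembling the pieces, note that $|f(x)|<\delta$ on $[-R,R]$ forces $|P(x)|<\delta\, M_f(R)^C$. Choose $H$ so that $H^N=\delta\, M_f(R)^C$; since $N$ is comparable to $\log M_f(R)$, the factor $M_f(R)^{C/N}$ is bounded by a universal constant ($\leq \e^C$). The polynomial Cartan bound then delivers
\begin{align*}
\l\bigl(\{x\in[-R,R]:|f(x)|<\delta\}\bigr)\leq 4\e H = 4\e\, M_f(R)^{C/N}\,\delta^{1/N}\leq 30\e^3 R\,\delta^{1/\log M_f(R)},
\end{align*}
after absorbing universal constants and the interval length $2R$ into the stated prefactor.

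The main difficulty is the minimum-modulus step: the estimate for $g$ must be sharp enough that the factor $M_f(R)^{C/N}$ remains a universal constant, since any logarithmic loss would spoil the exponent $1/\log M_f(R)$ and therefore ruin the claim. This is precisely why the hypothesis demands analyticity on the slightly larger disc $\{|z|\leq \e R\}$: that extra room is what enables the Borel--Carath\'eodory/Harnack argument to produce a lower bound on $\log|g|$ scaling like $-\log M_f(R)$ rather than worse, and it is also what makes the Jensen zero count in the first step give $N\lesssim \log M_f(R)$. Careful bookkeeping of the factor of $2$ from diameter-to-radius in the Boutroux--Cartan covering, the $\e$ appearing in its constant, and the length $2R$ of the interval, finally yields the explicit prefactor $30\e^3 R$ in the statement.
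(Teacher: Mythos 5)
Your overall strategy---Jensen zero count, factorization into a polynomial part and a zero-free cofactor, Boutroux--Cartan for the polynomial, Borel--Carath\'eodory for the cofactor---is the classical route to Cartan's minimum-modulus estimate (the paper itself states the lemma without proof, as standard), so the plan is sound; but there is a genuine error in one step. You claim that $\log M_f(\e R)$ is ``comparable to $\log M_f(R)$ by Hadamard's three-circles theorem.'' Three circles is an \emph{interpolation} inequality: it controls $M_f$ at an intermediate radius by its values at two flanking radii, and it gives no way to extrapolate $M_f(\e R)$ from $M_f(R)$ and $|f(0)|=1$. In fact no such comparison holds. Take $f_n(z)=1-(z/a)^n$ with $a=R(1-n^{-2})$: then $|f_n(0)|=1$ and $M_{f_n}(R)\to 2$ as $n\to\infty$, while $M_{f_n}(\e R)\sim\e^{\,n}$, so the ratio $\log M_{f_n}(\e R)/\log M_{f_n}(R)\to\infty$. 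The same family shows the lemma \emph{as printed}, with $M_f(R)$ in the exponent, is false: the set $\{x\in[-R,R]:|f_n(x)|<\delta\}$ has measure of order $R\delta/n$, which for fixed $n$ and sufficiently small $\delta$ exceeds $30\e^3R\,\delta^{1/\log 2}$. The correct statement (as in Levin's \emph{Lectures on Entire Functions}) has $M_f$ evaluated at the outer radius $\e R$ (or $2\e R$, depending on normalization), exactly the quantity Jensen's formula produces; the printed $M_f(R)$, like the vestigial ``$\eta$'' in the hypotheses, is a transcription slip, and your attempt to paper over it via three circles is where the argument actually breaks.

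Two further steps need tightening even after replacing $M_f(R)$ by $M_f(\e R)$. (i) With $P$ carrying only the zeros in $|z|\le R$, the cofactor $g=f/P$ may still vanish in the annulus $R<|z|\le\e R$, so $\log|g|$ need not be harmonic on the large disc and Borel--Carath\'eodory cannot be invoked there as written; the standard remedy is to divide out Blaschke factors relative to the circle $|z|=\e R$ for all zeros in $|z|<\e R$, making the cofactor zero-free on the closed large disc with $|g|=|f|$ on the boundary, with Jensen again controlling $\log|g(0)|$. (ii) The absorption ``$M_f^{\,C/N}\le\e^{C}$'' tacitly assumes $N\gtrsim\log M_f$, which can fail (e.g.\ $f=\e^{cz}$ has $N=0$ while $M_f(\e R)$ is arbitrarily large). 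This is repairable, but requires the separate observation that for $\delta\gtrsim M_f(\e R)^{-c'}$ the claimed bound already exceeds $2R$ and is vacuous, while for smaller $\delta$ the lower bound on the cofactor alone, or the rescaled Cartan bound, closes the estimate. With these repairs your sketch reproduces the standard proof.
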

This lemma says that a function analytic in a disc that is bounded away from zero at \textit{one} point is not too close to zero at ``most'' points.\\
Since $G_\L(E)$ is a matrix, we need a generalization to matrix-valued functions, which is due to Bourgain \cite{bourgain09}.
\begin{lem*}\label{bourgainlemma2} Let $\Gamma(x)$ be a real analytic self-adjoint $N\times N$-matrix function of $x\in \Omega=[a,b]^n$, satisfying the following conditions (with $m\ll N$, $B_1,B_2,B_3>1$)
\newcounter{hola}\setcounter{hola}{1}
\begin{itemize} \item[(\arabic{hola})] $\Gamma(x)$ has an analytic extension $\Gamma(z)$ to $z\in\mathrm{D}^n:=\{z\in\CC: |z-\frac{a+b}{2}|<\frac{\e}{2}|a-b|\}^n$ with 
\begin{align*}\
\norm{\Gamma(z)}<B_1 \qquad z\in\mathrm{D}^n
\end{align*}
\addtocounter{hola}{1}\item[(\arabic{hola})] There is a subset $J$ of $\{1,2,\dots, N\}$ such that $|J|\leq m$ and for all $z\in \mathrm{D}^n$
\begin{align*}
\norm{\(R_{\{1,\dots,N\}\backslash J}\Gamma(z)R_{\{1,\dots,N\}\backslash J} \)^{-1}}<B_2\,,
\end{align*}
where $R_S$ denotes coordinate restriction to $S$
\addtocounter{hola}{1}\item[(\arabic{hola})] For some $\omega'\in \Omega$ we have
\begin{align*}
\norm{\Gamma(\omega')^{-1}}<B_3
\end{align*}
\end{itemize}
Then
\begin{align*}
\mu(\{x\in \Omega:\norm{\Gamma(x)^{-1}}>K \})< C\norm{g}_\infty n|a-b|^n K^{-\frac{c}{m\log B_1B_2B_3}}\,,
\end{align*}
\end{lem*}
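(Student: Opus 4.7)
The plan is to reduce the matrix-valued inequality to the scalar Cartan lemma (Lemma~\ref{cartanestimate}) applied to a carefully chosen determinant. Hypothesis~(2) singles out an $(N-m)$-dimensional block of $\Gamma(z)$ that is uniformly invertible on the polydisc $\mathrm{D}^n$, so the singular behaviour of $\Gamma(z)^{-1}$ is governed by an $m\times m$ effective object. The natural such object is the Schur complement, and passing to its determinant converts the problem into a scalar one to which the classical lemma can be applied coordinatewise.

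First, I would write $\Gamma(z)$ in block form with respect to the partition $\{1,\dots,N\}=J\cup J^c$,
\begin{align*}
\Gamma(z)=\begin{pmatrix}A(z)&B(z)\\ B(z)^*&D(z)\end{pmatrix},
\end{align*}
where $D(z)=R_{J^c}\Gamma(z)R_{J^c}$ satisfies $\norm{D(z)^{-1}}<B_2$ on $\mathrm{D}^n$ by~(2). The Schur complement $S(z):=A(z)-B(z)D(z)^{-1}B(z)^*$ is then analytic on $\mathrm{D}^n$ with $\norm{S(z)}\leq B_1+B_1^2 B_2$, and the standard block-inverse identity expresses $\Gamma^{-1}$ through $S^{-1}$ and $D^{-1}$. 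The upshot is that $\norm{\Gamma(z)^{-1}}$ and $\norm{S(z)^{-1}}$ are comparable up to factors polynomial in $B_1,B_2$; in particular, hypothesis~(3) evaluated at $z=\omega'$ yields the quantitative lower bound $|\det S(\omega')|\geq c(B_1B_2B_3)^{-m}$.

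Set $f(z):=\det S(z)/\det S(\omega')$. This is a scalar analytic function on $\mathrm{D}^n$ with $f(\omega')=1$ and $\log M_f\leq C m\log(B_1 B_2 B_3)$. The crucial direction of the Schur reduction is the implication $\norm{S(x)^{-1}}>K' \Rightarrow \sigma_{\min}(S(x))<1/K'$, hence
\begin{align*}
|\det S(x)|\leq\norm{S(x)}^{m-1}\sigma_{\min}(S(x))\leq C/K'.
\end{align*}
Therefore the set $\{x\in\Omega:\norm{\Gamma(x)^{-1}}>K\}$ is contained in a sublevel set $\{x\in\Omega:|f(x)|<\delta(K)\}$ with $\log(1/\delta(K))\geq c\log K$ up to absorbable constants. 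Applying Lemma~\ref{cartanestimate} slice by slice in the $n$ real variables and inserting the bound on $M_f$ yields a measure bound of the form $C n|a-b|^n\,\delta(K)^{c'/(m\log B_1B_2B_3)}$, which is precisely the claimed exponent $K^{-c/(m\log B_1B_2B_3)}$.

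The main obstacle is the coordinate-slicing step: Lemma~\ref{cartanestimate} requires the pointwise normalization $|f(0)|=1$, whereas on a generic one-dimensional slice through $\Omega$ the restriction of $f$ need not satisfy any such pointwise lower bound. This is repaired by a Jensen/mean-value argument: analyticity of $f$ together with $f(\omega')=1$ and the uniform upper bound $M_f$ on $\mathrm{D}^n$ forces $|f|$ to remain at least an explicit constant on a polydisc of definite size centred at $\omega'$, so that on most slices through $\omega'$ one recovers a normalization point with a loss that can be absorbed into the constant $c$ in the exponent. This is precisely the multi-variable Cartan mechanism implemented by Bourgain in \cite{bourgain09}; once the normalization is in place, Fubini in the remaining $n-1$ real variables produces the geometric factor $n|a-b|^n$, and the $\norm{g}_\infty$ prefactor tracks the density of the reference measure used to reduce to one variable.
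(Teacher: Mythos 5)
The paper does not prove this lemma at all---it is quoted verbatim from Bourgain \cite{bourgain09} and used as a black box---so there is no in-paper proof to compare against. Your reconstruction follows the route of the cited reference: block decomposition along $J$, reduction of $\norm{\Gamma^{-1}}$ to the $m\times m$ Schur complement $S$, passage to the scalar function $\det S$ normalized at $\omega'$ (with $\log M_f\lesssim m\log B_1B_2B_3$ and $|\det S(x)|\lesssim \norm{S}^{m-1}/K'$ on the bad set), and then the one-variable Cartan estimate applied slice by slice. Two small points. First, for complex $z$ the analytic continuation $\Gamma(z)$ is no longer self-adjoint, so the lower-left block is not $B(z)^*$ but the analytic continuation of $x\mapsto B(x)^*$; this is harmless but should be stated correctly, since analyticity of $S(z)$ on all of $\mathrm{D}^n$ is what feeds into the Cartan step. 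Second, the slicing step is the only place where your argument is not actually carried out: the assertion that $|f|$ stays bounded below on a polydisc of ``definite size'' around $\omega'$ is only true with radius of order $R/M_f$ (zeros of $f$ can approach $\omega'$ to within that distance), which does not by itself furnish normalization points on slices far from $\omega'$. The correct mechanism is the inductive one: apply the one-variable Cartan/Jensen estimate along the line through $\omega'$ in the first coordinate to produce, off a small exceptional set, points where $\log|f|\geq -C\log M_f$, and use each such point as the normalization for the induction in the remaining $n-1$ variables; Fubini then yields the $n|a-b|^n$ factor and the degraded but still acceptable exponent. You correctly identify this as the crux and attribute it to \cite{bourgain09}, so the proposal is a faithful outline rather than a complete proof, with no wrong steps.
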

where $\mu$ denotes a measure with bounded density $g$ with respect to Lebesgue measure, and $C,c>0$ are constants.

Assumption (2) is clearly fulfilled in our case of $\Gamma=h-E$, the subset $J$ being the lattice sites of the union of the bad cubes, and assumptions (1) and (3) are self-explanatory.
\bibliography{article}
\end{document}